\newtheorem{proposition}{Proposition}
\newtheorem{proposition?}{Proposition?}
\newtheorem{corollary}{Corollary}
\theoremstyle{definition}
\newtheorem{example}{Example}
\newtheorem{definition}{Definition}
\newcommand{\complex}{\mathbb C} 
\newcommand{\integer}{\mathbb Z} 
\newcommand{\half}{\tfrac{1}{2}} 
\newcommand{\hi}{\mathcal{H}} 
\newcommand{\hik}{\mathcal{K}} 
\newcommand{\lin}{\mathcal{L}}
\newcommand{\lh}{\mathcal{L(H)}} 
\newcommand{\lk}{\mathcal{L(K)}} 
\newcommand{\sh}{\mathcal{S(H)}} 
\newcommand{\ip}[2]{\left\langle\,#1\,|\,#2\,\right\rangle} 
\newcommand{\kb}[2]{|#1\rangle\langle#2|} 
\newcommand{\no}[1]{\left\|#1\right\|} 
\newcommand{\tr}[1]{\mathrm{tr}\left[#1\right]} 
\newcommand{\ptr}[2]{\mathrm{tr}_{#1}[#2]} 
\newcommand{\id}{\mathbbm{1}} 
\newcommand{\ltwo}[1]{\ell^2(#1)} 
\newcommand{\obsin}{\mathfrak{O}(\hin)}
\newcommand{\F}{\mathsf{F}}
\newcommand{\G}{\mathsf{G}}
\newcommand{\M}{\mathsf{M}}
\newcommand{\N}{\mathsf{N}}
\newcommand{\X}{\mathsf{X}}
\newcommand{\Y}{\mathsf{Y}}
\newcommand{\I}{\mathcal{I}}
\newcommand{\chan}{\mathfrak{C}}
\newcommand{\chansub}{\mathfrak{D}}
\newcommand{\chin}{\chan(\hin)}
\newcommand{\Cc}{\mathcal{C}}
\newcommand{\pleq}{\preceq}
\newcommand{\hin}{\hi^{in}} 
\newcommand{\hout}{\hi^{out}} 
\newcommand{\state}[1]{\mathcal{S}(#1)}
\newcommand{\lhin}{\mathcal{L}(\hin)}
\newcommand{\lhout}{\mathcal{L}(\hout)}
\begin{document}

\title[]{Incompatibility of quantum channels}

\author{Teiko Heinosaari$^\clubsuit$}
\address{$\clubsuit$ Turku Centre for Quantum Physics, Department of Physics and Astronomy, University of Turku, FI-20014, Finland}

\author{Takayuki Miyadera$^\spadesuit$}
\address{$\spadesuit$ Department of Nuclear Engineering, Kyoto University, 6158540 Kyoto, Japan}

\begin{abstract}
Two quantum channels are called compatible if they can be obtained as marginals from a single broadcasting channel; otherwise they are incompatible. 
We derive a characterization of the compatibility relation in terms of concatenation and conjugation, and we show that all pairs of sufficiently noisy quantum channels are compatible.
The complement relation of incompatibility can be seen as a unifying aspect for several important quantum features, such as impossibility of universal broadcasting and unavoidable measurement disturbance.
We show that the concepts of entanglement breaking channel and antidegradable channel can be completely characterized in terms compatibility.
\end{abstract}

\maketitle


\section{Introduction}

One of the fundamental features of quantum theory is that not all observables are jointly measurable. 
This observation goes back to the Heisenberg's uncertainty principle and Bohr's notion of complementarity, and has since then been studied extensively.
Two observables that do not have a joint measurement are called incompatible. 
Recently incompatibility of observables has been formulated and studied in general operational theories \cite{BuHeScSt13,StBu14,Banik15,Plavala16,SeReChZi16}, hence opening the possibility to compare features of incompatibility in quantum theory to other operational theories.
Interestingly, quantum theory contains maximally incompatible pairs of observables, but only in an infinite dimensional Hilbert space \cite{HeScToZi14}.
There appear to be diverse aspects of quantum incompatible that urge for further investigation.

Incompatibility can be defined not only for observables but also for channels \cite{HeMiZi16}.
We recall this definition and show that it reduces to the joint measurability if channels are of the measurement form, hence the definition is, indeed, a natural generalization of joint measurability of quantum observables.
As pointed out in \cite{HeMiZi16}, the compatibility relation is linked to the concatenation of channels. 
We develop this idea further and characterize the compatibility relation on channels in terms of concatenation and conjugation. 
The central aim of this paper is to demonstrate the broad applicability of the concepts of compatibility and incompatibility, and reveal their connections to various features of quantum information processing.
In particular, the impossibility of universal broadcasting \cite{Barnumetal96} and the unavoidability of measurement disturbance \cite{HeMi13} appear naturally in this framework.

We prove that the concepts of completely depolarizing channel, entanglement breaking channel \cite{HoShRu03} and antidegradable channel \cite{CuRuSm08} can be completely characterized in terms of compatibility.
The introduced framework allows us also to generalize the notion of incompatibility breaking channels \cite{HeKiReSc15} and we show that a channel destroying the incompatibility between any finite set of channels is entanglement breaking.

The outline of this paper is as follows. 
In Sec. \ref{sec:channels} we recall the relevant basic concepts, including the concatenation preorder of quantum channels.
In Sec. \ref{sec:incompatibility} we formulate the incompatibility of channels and study its consequences. 
Finally, in Sec. \ref{sec:special} we show how certain special classes of channels can be characterized in terms of their compatibility properties.

We will restrict to finite dimensional Hilbert spaces.
We denote by $\lh$ the vector space of linear operators on a Hilbert space $\hi$, and by $\sh$  the states on $\hi$, i.e., positive operators of trace one.

\section{Channels, observables and concatenation}\label{sec:channels}

\subsection{Quantum channels}

Let $\hi$ and $\hik$ be Hilbert spaces.
A \emph{quantum channel} is a completely positive linear map $\Lambda: \lh \to \lk$ that is unital, i.e., $\Lambda(\id_\hi)=\id_\hik$. 
This mathematical description is the Heisenberg picture of a quantum channel and will be the most suitable for our investigation.
The physical meaning of a channel $\Lambda$ is, perhaps, more evident when we look at its dual action on states. 
The Schr\"odinger picture of $\Lambda$ is the map $\Lambda^*: \lk \to \lh$ determined by the formula
\begin{equation}
\tr{\varrho \Lambda(T)} = \tr{ \Lambda^*(\varrho) T} \, , 
\end{equation}
required to hold for all states $\varrho \in \state{\hik}$ and operators $T \in \lin(\hi)$.
In the Schr\"odinger picture a quantum channel is a completely positive and trace preserving linear map on Hilbert space operators. 
A channel $\Lambda^*$ is fully specified by its action on the set of states, so we often write it as a map on states to further emphasize the use of the Schr\"odinger picture.
We will use the symbol $*$ in the superscript to denote the Schr\"odinger picture of a channel. 

We will denote by $\hin$ and $\hout$ the input and output Hilbert spaces in the Schr\"odinger picture, respectively.
The output space $\hout$ can be different from the input space $\hin$.
For instance, a channel that adds another system in a fixed state $\eta$, i.e., the map $\varrho \mapsto \varrho \otimes \eta$ is a valid channel.
We will mostly focus on channels that have the same fixed input space $\hin$ but arbitrary (finite dimensional) output space.
We denote by $\chan(\hin)$ this set of channels.

\subsection{Quantum observables}

A \emph{quantum observable} is commonly described as a positive operator valued measure (POVM).
We will assume that there are finite number of possible measurement outcomes, so it is possible and convenient to define an observable as a function $\M:x\mapsto \M(x)$ from a finite set of measurement outcomes $\Omega_\M \subset \integer$ to the set of positive operators on an input Hilbert space $\hin$.
This function must satisfy the normalization constraint $\sum_{x\in\Omega_\M} \M(x)=\id$, where $\id$ is the identity operator on $\hin$.
The probability of obtaining a measurement outcome $x$ for an input state $\varrho$ is $\tr{\varrho\M(x)}$.
We denote by $\obsin$ the set of all observables on $\hin$.

We will write and think of an observable as a special kind of channel. 
First, for a finite set $\Omega\subset\integer$, we denote by $\ltwo{\Omega}$ the Hilbert space of functions $f:\Omega \to \complex$.
The inner product of two functions $f$ and $g$ is
\begin{equation*}
\ip{f}{g} = \sum_x \overline{f(x)} g(x) \, .
\end{equation*}
For each $x\in \Omega$, we denote by $\delta_x$ the Kronecker function of $x$, i.e., $\delta_x(x)=1$ and $\delta_x(y)=0$ for $y\neq x$.
The set $\{ \delta_x : x \in \Omega \}$ is an orthonormal basis of $\ltwo{\Omega}$.
In particular, the dimension of $\ltwo{\Omega}$ is the order of $\Omega$.
For each observable $\M$, we define a channel $\Gamma_\M: \lin(\ltwo{\Omega_\M}) \to \mathcal{L}(\hin)$ as
\begin{equation}
\Gamma_\M(A) = \sum_x \ip{\delta_x}{A \delta_x} \M(x) \, .
\end{equation}
In the Schr\"odinger picture this channel reads
\begin{equation}
\Gamma^*_\M(\varrho) = \sum_x \tr{\varrho \M(x)} \kb{\delta_x}{\delta_x} \, ,
\end{equation}
hence, this is a channel that writes the measurement probabilities $\tr{\varrho \M(x)}$ into orthogonal pure states. 
The essential point is that orthogonal pure states are perfectly distinguishable, so the measurement outcome distribution can be recovered from the output state $\Gamma^*_\M(\varrho)$.

\subsection{Concatenation preorder}

Suppose we have two channels $\Lambda_1:\lin(\hout_1) \to \lin(\hin_1)$ and $\Lambda_2:\lin(\hout_2) \to \lin(\hin_2)$ such that $\lin(\hin_1) = \lin(\hout_2)$.
Then the functional composition $\Lambda_2 \circ \Lambda_1$ is defined and it is a channel from $\lin(\hout_1)$ to $\lin(\hin_2)$. 
Physically the composition corresponds to a sequential implementation of these two channels and we call the new channel a \emph{concatenation} of $\Lambda_1$ and $\Lambda_2$. 

\begin{definition}
For two channels $\Lambda_1$ and $\Lambda_2$, we denote $\Lambda_1 \pleq \Lambda_2$ if $\Lambda_1 =  \Lambda_2 \circ \Theta$ for some channel $\Theta$.
We also denote $\Lambda_1 \simeq \Lambda_2$ if both $\Lambda_1 \pleq \Lambda_2$ and $\Lambda_2 \pleq \Lambda_1$ hold.
\end{definition}

The binary relation $\pleq$ is reflexive (i.e. $\Lambda \pleq \Lambda$) and transitive (i.e. $\Lambda_1 \pleq \Lambda_2 \pleq \Lambda_3$ implies $\Lambda_1 \pleq \Lambda_3$), hence it is a preorder on $\chan(\hin)$.
It fails to be a partial order since it is not antisymmetric; there are pairs of channels $\Lambda_1$ and $\Lambda_2$ such that $\Lambda_1 \simeq \Lambda_1$ but $\Lambda_1 \neq \Lambda_2$. 
We say that two channels $\Lambda_1$ and $\Lambda_2$ satisfying $\Lambda_1 \simeq \Lambda_1$ are \emph{(concatenation) equivalent}.
In the Schr\"odinger picture the order of concatenation is the opposite to the that of Heisenberg picture, i.e., 
\begin{equation*}
 (\Lambda_2 \circ \Lambda_1 )^* = \Lambda^*_1 \circ \Lambda^*_2 \, .
\end{equation*}

In the following we show that for two channels related to observables the concatenation preorder is equivalent to the post-processing relation \cite{MaMu90a}.

\begin{proposition}
Let $\M$ and $\N$ be two observables. 
The following are equivalent:
\begin{enumerate}[(i)]
\item $\Gamma_\N \pleq \Gamma_\M$
\item $\N$ is a post-processing of $\M$, i.e., there is a stochastic matrix $\nu$ such that
\begin{equation}\label{eq:post}
\N(x) = \sum_y \nu_{xy} \M(y) \, .
\end{equation}
\end{enumerate}
\end{proposition}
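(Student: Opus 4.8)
The plan is to prove the two implications separately, using the explicit form of $\Gamma_\M$ and the fact that a stochastic matrix $\nu$ is precisely the data of a classical channel, i.e. a channel between the commutative algebras $\lin(\ltwo{\Omega_\M})$ and $\lin(\ltwo{\Omega_\N})$ that is moreover \emph{diagonal} (it maps diagonal operators to diagonal operators). So the heart of the matter is to show that the $\Theta$ witnessing $\Gamma_\N \pleq \Gamma_\M$ may be taken to be such a classical channel.

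First, for (ii) $\Rightarrow$ (i): given a stochastic matrix $\nu$ with $\N(x)=\sum_y \nu_{xy}\M(y)$, define $\Theta:\lin(\ltwo{\Omega_\N})\to\lin(\ltwo{\Omega_\M})$ by $\Theta(B)=\sum_{x,y}\nu_{xy}\,\ip{\delta_x}{B\delta_x}\,\kb{\delta_y}{\delta_y}$. One checks directly that $\Theta$ is completely positive (it is a sum of the positive rank-one terms weighted by $\nu_{xy}\geq 0$) and unital (using $\sum_x\nu_{xy}=1$ for each $y$, which is the column-stochasticity; here I should be careful which index the stochasticity constraint falls on and match conventions with \eqref{eq:post}). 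Then a short computation gives $\Gamma_\M(\Theta(B))=\sum_y\ip{\delta_y}{\Theta(B)\delta_y}\M(y)=\sum_{x,y}\nu_{xy}\ip{\delta_x}{B\delta_x}\M(y)=\sum_x\ip{\delta_x}{B\delta_x}\N(x)=\Gamma_\N(B)$, so $\Gamma_\N=\Gamma_\M\circ\Theta$ and hence $\Gamma_\N\pleq\Gamma_\M$.

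For the converse (i) $\Rightarrow$ (ii), suppose $\Gamma_\N=\Gamma_\M\circ\Theta$ for some channel $\Theta:\lin(\ltwo{\Omega_\N})\to\lin(\ltwo{\Omega_\M})$. Define $\nu_{xy}:=\ip{\delta_y}{\Theta(\kb{\delta_x}{\delta_x})\delta_y}$. Since $\Theta$ is positive and unital, $\Theta(\kb{\delta_x}{\delta_x})$ is a positive operator and $\sum_x\Theta(\kb{\delta_x}{\delta_x})=\Theta(\id)=\id$, from which $\nu_{xy}\geq 0$ and $\sum_x\nu_{xy}=1$ — so $\nu$ is a stochastic matrix. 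It remains to verify \eqref{eq:post}. Evaluating the equality $\Gamma_\N(\kb{\delta_x}{\delta_x})=\Gamma_\M(\Theta(\kb{\delta_x}{\delta_x}))$ and using the definition of $\Gamma_\M$ and $\Gamma_\N$ gives $\N(x)=\sum_y\ip{\delta_y}{\Theta(\kb{\delta_x}{\delta_x})\delta_y}\M(y)=\sum_y\nu_{xy}\M(y)$, which is exactly the post-processing relation.

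The step I expect to require the most care is the converse direction, specifically the bookkeeping: $\Theta$ is a priori an arbitrary (not necessarily classical) channel, and one must check that only its "diagonal" matrix elements $\ip{\delta_y}{\Theta(\kb{\delta_x}{\delta_x})\delta_y}$ enter the relevant equations — which is exactly why the argument works, since $\Gamma_\M$ only ever sees $\ip{\delta_y}{A\delta_y}$. Beyond that it is mostly a matter of fixing the stochastic-matrix convention (row- vs. column-sums) consistently with \eqref{eq:post} and with the notion of channel being unital in the Heisenberg picture; no genuine obstacle is anticipated.
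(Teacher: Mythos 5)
Your proof is correct and follows essentially the same route as the paper: in (i)$\Rightarrow$(ii) you define $\nu_{xy}=\ip{\delta_y}{\Theta(\kb{\delta_x}{\delta_x})\delta_y}$ exactly as the paper does, and in (ii)$\Rightarrow$(i) your explicit map $\Theta(B)=\sum_{x,y}\nu_{xy}\ip{\delta_x}{B\delta_x}\kb{\delta_y}{\delta_y}$ is precisely the channel the paper writes in Kraus form with $K_{xy}=\sqrt{\nu_{xy}}\,\kb{\delta_x}{\delta_y}$. Your handling of the stochasticity convention ($\sum_x\nu_{xy}=1$, matching unitality of $\Theta$) is consistent with \eqref{eq:post}, so there is no gap.
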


\begin{proof}
(i)$\Rightarrow$(ii): By the assumption there exists a channel $\Theta$ such that $\Gamma_\N = \Gamma_\M \circ \Theta$.
For each $x\in\Omega_\M$, $y\in\Omega_\N$, we define $\nu_{xy}$ as
\begin{equation}
\nu_{xy} = \ip{\delta_y}{\Theta(\kb{\delta_x}{\delta_x})\delta_y} \, .
\end{equation}
It follows from the positivity and unitality of $\Theta$ that $\nu$ is a stochastic matrix.
The equality $\Gamma_\N(\kb{\delta_x}{\delta_x}) = (\Gamma_\M \circ \Theta)(\kb{\delta_x}{\delta_x})$ then implies \eqref{eq:post}.

(ii)$\Rightarrow$(i): For each $x\in\Omega_\M$, $y\in\Omega_\N$, we define an operator $K_{xy}$ as 
\begin{equation*}
K_{xy} = \sqrt{\nu_{xy}} \kb{\delta_x}{\delta_y} \, .
\end{equation*}
Then we define a map $\Theta$ as
\begin{equation}
\Theta(A) = \sum_{x,y} K_{xy}^\ast A K_{xy} \, .
\end{equation}
This is a Kraus operator-sum form, so $\Theta$ is completely positive.
It is direct to verify that $\Gamma_\N = \Gamma_\M \circ \Theta$.
\end{proof}

If $\M$ is an observable and $\Lambda$ is a channel, then we denote by $\Lambda(\M)$ the observable defined as
\begin{equation}
\Lambda(\M)(x) := \Lambda(\M(x)) \, .
\end{equation}
Hence, $\Lambda \circ \Gamma_\M = \Gamma_{\Lambda(\M)}$.

\subsection{Tensor product of channels}

While the concatenation corresponds to a sequential implementation of two channels, we can also implement two channels in parallel.
The essential difference is that in the parallel implementation one needs two input systems instead of one. 

Suppose we have two linear maps $\Lambda_1:\lin(\hout_1) \to \lin(\hin_1)$ and $\Lambda_2:\lin(\hout_2) \to \lin(\hin_2)$.
For all $A \in \lin(\hout_1)$ and $B�\in \lin(\hout_2)$, we denote
\begin{equation}\label{eq:tensor}
\Lambda_1 \otimes \Lambda_2 (A \otimes B) := \Lambda_1(A) \otimes \Lambda_2 (B) \, .
\end{equation}
Since the product operators $A \otimes B$ span the vector space $\lin (\hout_1 \otimes \hout_2)$, the formula \eqref{eq:tensor} determines a linear map $\Lambda_1 \otimes \Lambda_2$ from $\lin (\hout_1 \otimes \hout_2)$ to $\lin (\hin_1 \otimes \hin_2)$.
The map $\Lambda_1 \otimes \Lambda_2$ is called the tensor product of $\Lambda_1$ and $\Lambda_2$.
Clearly, the tensor product of two channels is a channel.

\subsection{Conjugate channel}

We recall that by the \emph{Stinespring dilation theorem} any channel $\Lambda:\lin(\hout)\to\lin(\hin)$ can be written in the form
\begin{equation}\label{eq:stinespring-h}
\Lambda(A) =V^*(A\otimes \id_\hik)V \, ,
\end{equation}
where $\hik$ is a Hilbert space and $V:\hin\to\hout\otimes \hik$ is an isometry, i.e., $V^*V=\id$ (see e.g. \cite{CBMOA03}).
The pair $(V, \hik)$ is called a \emph{Stinespring representation} for $\Lambda$.
A Stinespring representation $(V, \hik)$ for $\Lambda$ is called \emph{minimal} 
if the set $(\lin(\hout) \otimes \id) V \hin$ is dense in $\hout \otimes \hik$. 
Every channel has a minimal Stinespring representation, and if $\hin$ and $\hout$ are finite dimensional, then also $\hik$ is finite dimensional.
All Stinespring representations of $\Lambda$ can be obtained from a minimal one $(V, \hik)$ as follows:
for a Stinespring representation $(V',\hik')$ of $\Lambda$, there is an isometry $W: \hik \to \hik'$ 
such that 
\begin{equation}\label{eq:minimal}
V'= (\id_{\hin} \otimes W) V \, .
\end{equation} 
In addition, $(V', \hik')$ is minimal if and only if $W$ is unitary. 

The formula \eqref{eq:stinespring-h} gives rise to another channel $\bar{\Lambda}:\lin(\hik)\to\lin(\hin)$, defined as
\begin{equation}\label{eq:conjugate-h}
\bar{\Lambda}(B) =V^*(\id_{\hin} \otimes B)V \, ,
\end{equation}
and called a \emph{conjugate channel} of $\Lambda$.
The conjugate channel obviously depends on the used Stinespring representation of $\Lambda$, so the notation $\bar{\Lambda}$ should be used cautiously.
In the Schr\"odinger picture the formulas \eqref{eq:stinespring-h} and \eqref{eq:conjugate-h} read
\begin{equation}\label{eq:stinespring-s}
\Lambda^*(\varrho)=\ptr{\hik}{V\varrho V^*} \, , 
\end{equation}
and
\begin{equation}
\bar{\Lambda}^*(\varrho)=\ptr{\hout}{V\varrho V^*} \, ,
\end{equation}
so the conjugate channel is obtained when we trace over $\hout$ rather than $\hik$.
It is clear from the definition that a channel $\Lambda$ is a conjugate channel of its conjugate channel.
We will think the conjugacy as a symmetric relation in $\chin$.

The important fact for our following results is that \emph{all conjugate channels of a given channel $\Lambda$ are concatenation equivalent}.
To see this, let $\bar{\Lambda}$ be a conjugate channel constructed by using a minimal Stinespring representation 
$(V, \hik)$, and let $\bar{\Lambda}'$ be another conjugate channel related to a Stinespring representation $(V', \hik')$. 
Since $(V,\hik)$ is minimal, there exists an isometry $W:\hik \to \hik'$ satisfying \eqref{eq:minimal}.
We define a channel $\Theta:\lin(\hik')\to\lin(\hik)$ as $\Theta(A)= W^* A W$, and then $\bar{\Lambda}\circ \Theta= \bar{\Lambda}'$, showing that $\bar{\Lambda}' \pleq \bar{\Lambda}$.
On the other hand, fix a state $\eta\in\state{\hik}$ and define a channel $\Theta':\lin(\hik)\to\lin(\hik')$ as
\begin{equation}
\Theta'(A)= WA W^* + (\id - WW^*) \tr{\eta A} \, . 
\end{equation}
Then  $\bar{\Lambda} = \bar{\Lambda}'\circ \Theta'$, hence $\bar{\Lambda} \pleq \bar{\Lambda}'$, and therefore $\bar{\Lambda} \simeq \bar{\Lambda}'$.

The following result will be used several times later \cite[Theorem 2]{BeOr11}.

\begin{proposition}\label{prop:c-inverts-order}
Let $\Lambda_1,\Lambda_2\in\chin$.
$\Lambda_1 \pleq \Lambda_2$ if and only if $\bar{\Lambda}_2 \pleq \bar{\Lambda}_1$. 
\end{proposition}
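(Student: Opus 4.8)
The plan is to reduce the biconditional to a single implication and then establish that implication by dilating a witnessing concatenation. First I would note that it suffices to prove only the direction $\Lambda_1 \pleq \Lambda_2 \Rightarrow \bar{\Lambda}_2 \pleq \bar{\Lambda}_1$. Indeed, conjugation is a symmetric relation, and, as shown in the paragraph preceding this proposition, any two conjugate channels of the same channel are concatenation equivalent; since $\pleq$ respects $\simeq$ (by transitivity), the truth of $\bar{\Lambda}_2 \pleq \bar{\Lambda}_1$ does not depend on the chosen Stinespring representations, so the statement is well posed. Applying the single implication to $\bar{\Lambda}_2 \pleq \bar{\Lambda}_1$ and using that $\Lambda_i$ is itself a conjugate channel of $\bar{\Lambda}_i$ (so $\bar{\bar{\Lambda}}_i \simeq \Lambda_i$) then yields $\Lambda_1 \pleq \Lambda_2$, which closes the converse.

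For the implication itself, suppose $\Lambda_1 = \Lambda_2 \circ \Theta$ for a channel $\Theta : \lin(\hout_1) \to \lin(\hout_2)$. I would fix a Stinespring representation $(V_2,\hik_2)$ for $\Lambda_2$, so that $\Lambda_2(A) = V_2^*(A \otimes \id_{\hik_2}) V_2$ with $V_2 : \hin \to \hout_2 \otimes \hik_2$, and a Stinespring representation $(W,\hik_\Theta)$ for $\Theta$, so that $\Theta(B) = W^*(B \otimes \id_{\hik_\Theta}) W$ with $W : \hout_2 \to \hout_1 \otimes \hik_\Theta$. Setting $V_1 := (W \otimes \id_{\hik_2}) V_2$, a short computation using the identity $(W \otimes \id)^*(X \otimes Y)(W \otimes \id) = (W^* X W) \otimes Y$ shows that $V_1$ is an isometry $\hin \to \hout_1 \otimes (\hik_\Theta \otimes \hik_2)$ and that $(V_1, \hik_\Theta \otimes \hik_2)$ is a Stinespring representation of $\Lambda_1 = \Lambda_2 \circ \Theta$. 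The associated conjugate channel is then $\bar{\Lambda}_1(C) = V_1^*(\id_{\hout_1} \otimes C) V_1$ for $C \in \lin(\hik_\Theta \otimes \hik_2)$.

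It remains to connect $\bar{\Lambda}_1$ with $\bar{\Lambda}_2(D) = V_2^*(\id_{\hout_2} \otimes D) V_2$ for $D \in \lin(\hik_2)$. For this I introduce the channel $\Xi : \lin(\hik_2) \to \lin(\hik_\Theta \otimes \hik_2)$, $\Xi(D) = \id_{\hik_\Theta} \otimes D$ (in the Schr\"odinger picture this is simply the partial trace over $\hik_\Theta$); it is manifestly unital and completely positive, hence a channel. Substituting $C = \Xi(D)$, reading $\id_{\hout_1} \otimes \id_{\hik_\Theta} \otimes D$ as $\id_{\hout_1 \otimes \hik_\Theta} \otimes D$, and again invoking $(W \otimes \id)^*(X \otimes Y)(W \otimes \id) = (W^* X W) \otimes Y$ together with $W^*W = \id_{\hout_2}$, one obtains $\bar{\Lambda}_1(\Xi(D)) = V_2^*(\id_{\hout_2} \otimes D) V_2 = \bar{\Lambda}_2(D)$, that is, $\bar{\Lambda}_2 = \bar{\Lambda}_1 \circ \Xi$, hence $\bar{\Lambda}_2 \pleq \bar{\Lambda}_1$.

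The computations here are entirely routine. The points that require care are the bookkeeping of the tensor-factor identifications (passing between an operator on $\hout_1 \otimes \hik_\Theta \otimes \hik_2$ and its regrouping as $\id_{\hout_1 \otimes \hik_\Theta} \otimes D$, and checking that composing the two Stinespring isometries indeed produces a Stinespring isometry for $\Lambda_1$) and the initial reduction step, where one must use the representation-independence of the conjugate channel up to $\simeq$ so that the symmetry argument for the converse is legitimate. I expect that reduction/well-posedness check, rather than the dilation computation, to be the main conceptual obstacle.
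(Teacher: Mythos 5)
Your proof is correct, and it takes a mildly but genuinely different route from the paper's. Both arguments share the same skeleton: prove only $\Lambda_1 \pleq \Lambda_2 \Rightarrow \bar{\Lambda}_2 \pleq \bar{\Lambda}_1$ and recover the converse from $\Lambda_i \simeq \bar{\bar{\Lambda}}_i$ together with the representation-independence of conjugates up to $\simeq$ (which you rightly flag as the well-posedness issue). The difference lies in how the implication is dilated. The paper fixes \emph{minimal} Stinespring representations $(V_1,\hik_1)$, $(V_2,\hik_2)$, $(V_\Theta,\hik_\Theta)$ for $\Lambda_1$, $\Lambda_2$, $\Theta$, observes that $(V_\Theta\otimes\id)V_2$ is another dilation of $\Lambda_1$, and invokes the uniqueness part of the Stinespring theorem to produce an intertwining isometry $W$ with $(V_\Theta\otimes\id)V_2=(\id\otimes W)V_1$; the connecting channel is then $B\mapsto W^*(\id\otimes B)W$, expressed directly in terms of the minimal conjugate $\bar{\Lambda}_1$. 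You instead never use minimality inside the proof: you compose the dilation of $\Theta$ with that of $\Lambda_2$ to obtain a concrete (generally non-minimal) dilation $(W\otimes\id_{\hik_2})V_2$ of $\Lambda_1$, note that its conjugate composed with the trivial unital embedding $\Xi(D)=\id_{\hik_\Theta}\otimes D$ (partial trace over $\hik_\Theta$ in the Schr\"odinger picture) gives exactly $\bar{\Lambda}_2$, and then lean on the previously established fact that all conjugates of a fixed channel are concatenation equivalent to transfer the conclusion to any other conjugate of $\Lambda_1$. What each buys: your version is computationally lighter and makes the connecting channel transparent (one simply discards the extra environment $\hik_\Theta$), at the price of explicitly outsourcing the minimality argument to the preceding equivalence lemma; the paper's version keeps the construction pinned to minimal representations and exhibits the intertwiner explicitly, which is slightly heavier but self-contained at that point in the text. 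Your reduction of the biconditional and your tensor-factor bookkeeping are both sound, so no gap remains.
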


\begin{proof}
Let us assume that $\Lambda_1 \pleq \Lambda_2$, so there exists a channel $\Theta$ such that $\Lambda_1 =\Lambda_2 \circ \Theta$. 
We fix minimal Stinespring representations for 
$\Lambda_1, \Lambda_2$ and $\Theta$, so that 
$\Lambda_1(A)=V_1^*(A\otimes \id)V_1$, 
$\Lambda_2(B) =V_2^*(B\otimes \id)V_2$ and 
$\Theta(A)=V_{\Theta}^*(A\otimes \id)V_{\Theta}$.   
From $\Lambda_1 =\Lambda_2 \circ \Theta$ follows that
\begin{align}
V_1^*(A\otimes \id)V_1 =
V_2^*(V_\Theta^*\otimes \id)(A\otimes \id \otimes \id)(V_\Theta\otimes \id)V_2
\end{align}
for all $A\in \mathcal{L}(\hi)$. 
The minimality of $V_1$ implies that there exists an isometry $W$ satisfying 
\begin{align}
(V_\Theta\otimes \id)V_2 =(\id \otimes W)V_1.
\end{align}
The conjugate channel of $\Lambda_2$ satisfies 
for all $B\in\lin(\hik_2)$, 
\begin{eqnarray*}
\bar{\Lambda}_2(B) &=& 
V_2^*(\id \otimes B) V_2 =
V_2^* (V_\Theta^* \otimes \id)
(\id \otimes \id \otimes B)(V_\Theta\otimes \id) V_2 
\\
&=& 
V_1^*(\id \otimes W^*)(\id \otimes \id \otimes B)(\id \otimes W)V_1 
= 
V_1^* (\id \otimes W^*(\id \otimes B)W)V_1 \\
&=&  \bar{\Lambda}_1 \circ \Theta_W(B) \, , 
\end{eqnarray*}
where $\Theta_W :\lin(\hik_2)\to \lin(\hik_1)$
is a channel defined by 
\begin{equation}
\Theta_W(B) =W^*(\id \otimes B)W \, .
\end{equation}
Thus, we conclude that $\bar{\Lambda}_2 \pleq \bar{\Lambda}_1$. 

If we start from the assumption $\bar{\Lambda}_2 \pleq \bar{\Lambda}_1$, then the previous calculations show that $\bar{\bar{\Lambda}}_1 \pleq \bar{\bar{\Lambda}}_2$.
Since $\Lambda_1 \simeq \bar{\bar{\Lambda}}_1$ and $\Lambda_2 \simeq \bar{\bar{\Lambda}}_2$, it follows that $\Lambda_1 \pleq \Lambda_2$. 
\end{proof}

\section{Incompatibility and its consequences}\label{sec:incompatibility}

\subsection{Definition and basic properties}

Let us consider a channel $\Lambda^*$ that has an input space $\hin$ and the output space is a tensor product $\hout = \hout_1 \otimes \hout_2$.
This kind of channel is called a \emph{quantum broadcast channel} \cite{YaHaDe11}.
By concatenating $\Lambda^*$ with the partial traces on subsystems we get two channels $\Lambda^*_1$ and $\Lambda^*_2$,
\begin{equation}\label{eq:marginals-s}
\Lambda^*_1(\varrho) = \ptr{\hout_2}{\Lambda(\varrho)} \, , \quad \Lambda^*_2(\varrho) = \ptr{\hout_1}{\Lambda(\varrho)} \, .
\end{equation}
This corresponds to ignoring one part of the output.
In the Heisenberg picture the marginal conditions in \eqref{eq:marginals-s} read
\begin{equation}\label{eq:marginals-h}
\Lambda_1(A) = \Lambda(A \otimes \id) \, , \quad \Lambda_2(B) = \Lambda(\id \otimes B) \, ,
\end{equation}
required to hold for all $A\in\lin(\hout_1)$ and $B\in\lin(\hout_2)$.

\begin{definition}\label{def:compatible}
Let $\Lambda_1:\lin(\hout_1) \to \lin(\hin)$ and $\Lambda_2:\lin(\hout_2) \to \lin(\hin)$ be two channels.
If there exists a channel $\Lambda:\lin(\hout_1 \otimes \hout_2)\to\lin(\hin)$ such that \eqref{eq:marginals-h} holds for all $A\in\lin(\hout_1)$ and $B\in\lin(\hout_2)$, then $\Lambda_1$ and $\Lambda_2$ are \emph{compatible} and $\Lambda$ is their \emph{joint channel}.
Otherwise $\Lambda_1$ and $\Lambda_2$ are \emph{incompatible}.
\end{definition}

\begin{figure}
    \subfigure[]
    {
         \includegraphics[width=5cm]{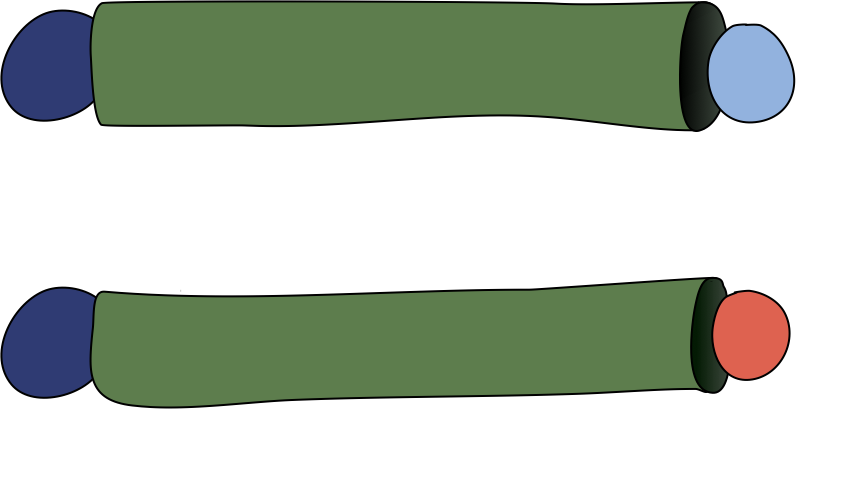}      
         }
    \subfigure[]
    {
        \includegraphics[width=5cm]{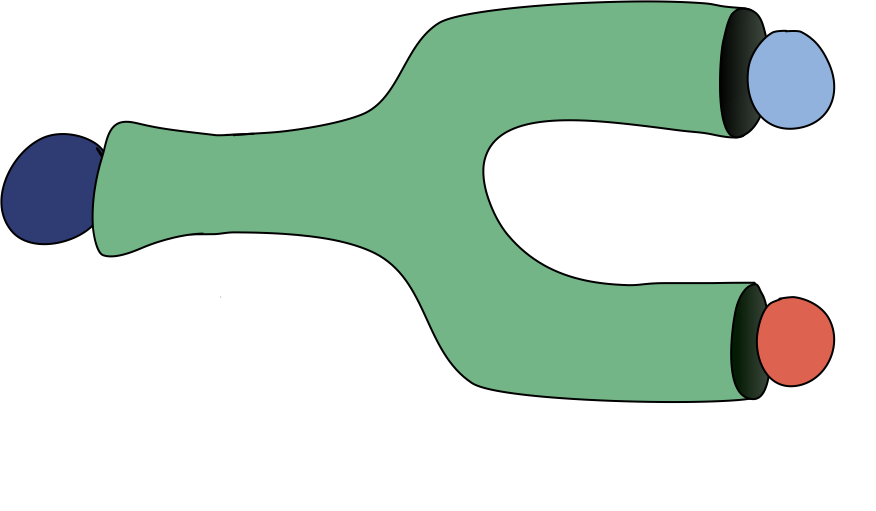} 

    }
        \caption{ \label{fig:tubes} Compatibility of two channels in the Schr\"odinger picture.
        (a) If two channels are implemented together, they require their own inputs. However, (b) a compatible pair can be implemented through their joint channel. In this case, a single input is enough to yield outputs for both channels.}
\end{figure}

The physical idea of compatibility in the Schr\"odinger picture is depicted in Fig. \ref{fig:tubes}.
We will first have two examples and then discuss some basic properties of the compatibility relation.

\begin{example}\label{ex:broad}
(\emph{No universal broadcasting})
 The most paradigmatic example of incompatible channels is the incompatibility of two identity channels. 
 This is nothing else but the impossibility of universal broadcasting; if $\Lambda_1^*$ and $\Lambda_2^*$ are identity channels, then \eqref{eq:marginals-s} is the broadcasting condition of a state $\varrho$ \cite{Barnumetal96}.
 The incompatibility of two identity channels is an obligatory precondition that any pair can be incompatible.
To see this, assume that there is a quantum broadcast channel $\Lambda:\lin(\hi \otimes \hi) \to \lin(\hi)$ that has the identity channels as marginals, i.e., 
\begin{equation}\label{eq:id-marginals}
A = \Lambda(A \otimes \id) \, , \quad B = \Lambda(\id \otimes B)
\end{equation}
for all $A,B\in\lin(\hi)$.
Let $\Lambda_1,\Lambda_2$ be any channels on $\lin(\hi)$.
We concatenate $\Lambda$ with the tensor product channel $\Lambda_1 \otimes \Lambda_2 : \lin(\hi \otimes \hi) \to  \lin(\hi \otimes \hi)$ and then we obtain marginals
\begin{equation}
(\Lambda \circ (\Lambda_1 \otimes \Lambda_2))(A \otimes \id) = \Lambda (\Lambda_1(A) \otimes \id) = \Lambda_1(A) 
\end{equation}
and
\begin{equation}
(\Lambda \circ (\Lambda_1 \otimes \Lambda_2))(\id \otimes B) = \Lambda (\id \otimes \Lambda_2(B)) = \Lambda_2(B) \, . 
\end{equation}
We conclude that if two identity channels are compatible (i.e. there exists $\Lambda$ satisfying \eqref{eq:id-marginals}), then all pairs of channels $\Lambda_1,\Lambda_2$ on $\lin(\hi)$ are compatible.
The impossibility of universal broadcasting is hence equivalent to the statement that there exists a pair of incompatible channels.
\end{example}

\begin{example}\label{ex:noisy}
(\emph{Noise makes channels compatible})
If noise is added enough, then noisy versions of any two channels become compatible.
To see this, let $\Lambda_1:\lin(\hout_1) \to \lin(\hin)$ and $\Lambda_2:\lin(\hout_2) \to \lin(\hin)$ be two channels. 
We fix states $\eta_1\in \state{\hout_1}$, $\eta_2\in \state{\hout_2}$ and define  channels $\Xi_1,\Xi_2$ as
\begin{equation}
\Xi_j(A) = \tr{\eta_j A} \id_{\hin} \, .
\end{equation}
The mixed channels $\half \Lambda_1 + \half \Xi_1$ and $\half \Lambda_2 + \half \Xi_2$ can be seen as noisy versions of $\Lambda_1$ and $\Lambda_2$, respectively.
They are compatible as they have a joint channel
\begin{align}
\Lambda(A \otimes B) = \half \tr{\eta_2 B} \Lambda_1(A) + \half \tr{\eta_1 A}\Lambda_2(B) \, .
\end{align}
This joint channel correspondence to a procedure where we use channels $\Lambda_1$ and $\Lambda_2$ randomly, half of the time each of them.
\end{example}

Example \ref{ex:noisy} was demonstrating the fact that any pair of channels become compatible if they are made noisy enough.
A related fact is that if two channels are compatible, then also channels that are below them in concatenation are compatible.
This is the content of the next proposition.

\begin{proposition}\label{prop:noisy}
Let $\Lambda_1,\Lambda_2,\Phi_1,\Phi_2 \in \chin$ be channels such that $\Phi_1 \pleq \Lambda_1$ and $\Phi_2 \pleq \Lambda_2$.
If $\Lambda_1$ and $\Lambda_2$ are compatible, then also $\Phi_1$ and $\Phi_2$ are compatible.
\end{proposition}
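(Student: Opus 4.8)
The plan is to build a joint channel for $\Phi_1$ and $\Phi_2$ directly out of the joint channel of $\Lambda_1$ and $\Lambda_2$ by pre-composing with the channels witnessing $\Phi_j \pleq \Lambda_j$. By hypothesis there is a joint channel $\Lambda:\lin(\hout_1\otimes\hout_2)\to\lin(\hin)$ with $\Lambda(A\otimes\id)=\Lambda_1(A)$ and $\Lambda(\id\otimes B)=\Lambda_2(B)$, and by the definition of $\pleq$ there are channels $\Theta_1:\lin(\hat\hout_1)\to\lin(\hout_1)$ and $\Theta_2:\lin(\hat\hout_2)\to\lin(\hout_2)$ (where $\hat\hout_j$ denotes the output space of $\Phi_j$) such that $\Phi_1=\Lambda_1\circ\Theta_1$ and $\Phi_2=\Lambda_2\circ\Theta_2$.

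Next I would form the tensor product channel $\Theta_1\otimes\Theta_2:\lin(\hat\hout_1\otimes\hat\hout_2)\to\lin(\hout_1\otimes\hout_2)$, which is again a channel by the remarks on tensor products of channels in Section~\ref{sec:channels}, and set
\begin{equation*}
\Lambda' := \Lambda\circ(\Theta_1\otimes\Theta_2) : \lin(\hat\hout_1\otimes\hat\hout_2)\to\lin(\hin) \, .
\end{equation*}
Being a concatenation of channels, $\Lambda'$ is a channel. I then claim $\Lambda'$ is a joint channel for $\Phi_1$ and $\Phi_2$, which is the whole content of the proposition.

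The verification is a short computation: for $A\in\lin(\hat\hout_1)$,
\begin{equation*}
\Lambda'(A\otimes\id) = \Lambda\big(\Theta_1(A)\otimes\Theta_2(\id)\big) = \Lambda\big(\Theta_1(A)\otimes\id\big) = \Lambda_1(\Theta_1(A)) = \Phi_1(A) \, ,
\end{equation*}
where the second equality uses that $\Theta_2$ is unital; symmetrically $\Lambda'(\id\otimes B)=\Phi_2(B)$ for $B\in\lin(\hat\hout_2)$, using unitality of $\Theta_1$. This establishes \eqref{eq:marginals-h} for $\Phi_1,\Phi_2$ with joint channel $\Lambda'$, hence they are compatible.

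There is no genuine obstacle here; the argument is an exercise in composing channels. The only points requiring a little care are keeping the Heisenberg-picture composition order straight (so that $\Phi_j=\Lambda_j\circ\Theta_j$ rather than the reverse, and that $\Theta_1\otimes\Theta_2$ is pre-composed with $\Lambda$), and explicitly invoking unitality of the $\Theta_j$ at the step where the identity factor is pushed through the tensor product. If one wished, the same statement could instead be read off from Proposition~\ref{prop:c-inverts-order} together with a characterization of compatibility via conjugates, but the direct construction above is the cleanest route.
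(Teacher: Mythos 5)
Your construction is correct and is exactly the paper's proof: pre-compose the joint channel $\Lambda$ with $\Theta_1\otimes\Theta_2$ and check the marginals, the only (welcome) addition being your explicit appeal to unitality of the $\Theta_j$ when pushing the identity factor through. Nothing further is needed.
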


\begin{proof}
By the assumption there are channels $\Theta_1$ and $\Theta_2$ such that $\Phi_1 = \Lambda_1 \circ \Theta_1$ and $\Phi_2 = \Lambda_2 \circ \Theta_2$.
Suppose that $\Lambda_1$ and $\Lambda_2$ are compatible, so they have a joint channel $\Lambda$.
We define a channel $\Phi$ as
\begin{align}
\Phi = \Lambda \circ (\Theta_1 \otimes \Theta_2) \, .
\end{align}
Then 
\begin{align*}
\Phi(A \otimes \id) = \Lambda( \Theta_1(A) \otimes \id ) = \Lambda_1(\Theta_1(A)) = \Phi_1(A)
\end{align*}
and similarly $\Phi(\id \otimes B) = \Phi_2(B)$.
Therefore, $\Phi$ is a joint channel of $\Phi_1$ and $\Phi_2$.
\end{proof}

It follows from Prop. \ref{prop:noisy} that the compatibility relation is the same for all channels that are equivalent in the concatenation sense.
More precisely, we have: 

\begin{corollary}
Let $\Lambda_1,\Lambda_2 \in \chin$ such that $\Lambda_1 \simeq \Lambda_2$.
A channel $\Lambda_3\in \chin$ is compatible with $\Lambda_1$ if and only if it is compatible with $\Lambda_2$.
\end{corollary}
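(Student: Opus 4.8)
The plan is to derive this corollary directly from Proposition~\ref{prop:noisy}, which already does all the real work; the corollary is essentially a symmetry observation. The statement $\Lambda_1 \simeq \Lambda_2$ unpacks to $\Lambda_1 \pleq \Lambda_2$ and $\Lambda_2 \pleq \Lambda_1$, and compatibility of a pair is itself a symmetric relation, so it suffices to prove just one direction of the ``if and only if'': if $\Lambda_3$ is compatible with $\Lambda_1$, then it is compatible with $\Lambda_2$. The reverse implication then follows by interchanging the roles of $\Lambda_1$ and $\Lambda_2$, which is legitimate precisely because $\simeq$ is symmetric.

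First I would note that $\Lambda_3 \pleq \Lambda_3$ by reflexivity of $\pleq$, and $\Lambda_2 \pleq \Lambda_1$ by the hypothesis $\Lambda_1 \simeq \Lambda_2$. Now suppose $\Lambda_1$ and $\Lambda_3$ are compatible. Applying Proposition~\ref{prop:noisy} with the substitution $(\Lambda_1,\Lambda_2,\Phi_1,\Phi_2) \mapsto (\Lambda_1,\Lambda_3,\Lambda_2,\Lambda_3)$ — so that $\Phi_1 = \Lambda_2 \pleq \Lambda_1$ and $\Phi_2 = \Lambda_3 \pleq \Lambda_3$ — yields that $\Lambda_2$ and $\Lambda_3$ are compatible. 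This gives the forward implication. For the converse, assume $\Lambda_2$ and $\Lambda_3$ are compatible; since $\Lambda_1 \pleq \Lambda_2$ also holds, Proposition~\ref{prop:noisy} applied with $(\Lambda_2,\Lambda_3,\Lambda_1,\Lambda_3)$ gives that $\Lambda_1$ and $\Lambda_3$ are compatible.

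There is no serious obstacle here: the only thing to be careful about is that Proposition~\ref{prop:noisy} is stated for a pair of dominations $\Phi_1 \pleq \Lambda_1$ and $\Phi_2 \pleq \Lambda_2$, so to move only one of the two channels one must explicitly invoke reflexivity for the other slot. One should also make sure to use the symmetry of the compatibility relation (which is immediate from Definition~\ref{def:compatible}, since a joint channel for $(\Lambda_1,\Lambda_3)$ composed with the swap on the output factors is a joint channel for $(\Lambda_3,\Lambda_1)$) when appealing to Proposition~\ref{prop:noisy}, whose hypothesis pairs $\Phi_1$ with $\Lambda_1$ and $\Phi_2$ with $\Lambda_2$ in a fixed order. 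Beyond that bookkeeping, the argument is a one-line corollary.
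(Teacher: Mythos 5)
Your proof is correct and follows exactly the route the paper intends: the corollary is stated as an immediate consequence of Proposition~\ref{prop:noisy}, applied with $\Lambda_2 \pleq \Lambda_1$ (resp. $\Lambda_1 \pleq \Lambda_2$) in one slot and reflexivity $\Lambda_3 \pleq \Lambda_3$ in the other, together with the symmetry of compatibility. Nothing is missing; your bookkeeping remarks about reflexivity and the swap on output factors only make explicit what the paper leaves implicit.
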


Previous observations show that the compatibility relation is harmoniously connected with the concatenation preorder.
In the following we show that the compatibility relation can, in fact, be characterized in terms of concatenation and conjugation.
Let us first note that for an isometric operator $V$ the map $A \otimes B \mapsto V^*(A\otimes B )V$ is a broadcasting channel, so it follows from the definition that any channel $\Lambda$ and its conjugate channel $\bar{\Lambda}$ are compatible.
The content of the next proposition is that the conjugate channel $\bar{\Lambda}$ is the optimal channel that is compatible with $\Lambda$.
This result can be also taken as the basic characterization of the compatibility relation.
Let us note again that all the conjugate channels of $\Lambda$ are equivalent in the concatenation preorder sense, so the statements (ii) and (iii) in Prop. \ref{prop:compatible} are unambiguous.
The content of Prop. \ref{prop:compatible} is depicted in Fig. \ref{fig:tubes-2}.

\begin{proposition}\label{prop:compatible}
Let $\Lambda_1$ and $\Lambda_2$ be two channels. 
The following are equivalent:
\begin{enumerate}[(i)]
\item $\Lambda_1$ and $\Lambda_2$ are compatible;
\item $\Lambda_1 \pleq \bar{\Lambda}_2$;
\item $\Lambda_2 \pleq \bar{\Lambda}_1$.
\end{enumerate}
\end{proposition}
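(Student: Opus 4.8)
The plan is to prove the cycle of implications (i)$\Rightarrow$(ii)$\Rightarrow$(iii)$\Rightarrow$(i), using Proposition \ref{prop:c-inverts-order} to get the symmetry between (ii) and (iii) almost for free once one implication is in hand. Concretely, by the symmetry of the conjugacy relation it suffices to prove (i)$\Rightarrow$(ii) and (ii)$\Rightarrow$(i); the equivalence of (ii) and (iii) then follows because $\Lambda_1 \pleq \bar{\Lambda}_2$ implies, by Proposition \ref{prop:c-inverts-order}, that $\bar{\bar{\Lambda}}_2 \pleq \bar{\Lambda}_1$, and $\bar{\bar{\Lambda}}_2 \simeq \Lambda_2$, so $\Lambda_2 \pleq \bar{\Lambda}_1$; the converse is identical.

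For (ii)$\Rightarrow$(i), the idea is that the pair $(\Lambda,\bar{\Lambda})$ built from a Stinespring isometry is already a compatible pair via the broadcasting channel $A\otimes B \mapsto V^*(A\otimes B)V$, as noted just before the proposition. So if $\Lambda_1 = \bar{\Lambda}_2 \circ \Theta$ for some channel $\Theta$, I would fix a Stinespring representation $V:\hin \to \hout_2 \otimes \hik$ of $\Lambda_2$, so that $\Lambda_2(B) = V^*(B\otimes \id)V$ and $\bar{\Lambda}_2(A) = V^*(\id\otimes A)V$ with $A\in\lin(\hik)$. Then define the broadcast channel $\Lambda:\lin(\hout_1\otimes\hout_2)\to\lin(\hin)$ by
\begin{equation*}
\Lambda(A\otimes B) = V^*\big(\Theta(A)\otimes B\big)V \, ,
\end{equation*}
extended linearly; this is a channel because it is the composition $(A\otimes B \mapsto V^*(C\otimes B)V)\circ(\Theta\otimes \mathrm{id})$ of channels, with $C = \Theta(A) \in \lin(\hik)$. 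Its marginals are $\Lambda(A\otimes\id) = V^*(\Theta(A)\otimes\id)V = \bar{\Lambda}_2(\Theta(A)) = \Lambda_1(A)$ and $\Lambda(\id\otimes B) = V^*(\id\otimes B)V = \Lambda_2(B)$, so $\Lambda$ is a joint channel of $\Lambda_1$ and $\Lambda_2$.

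For (i)$\Rightarrow$(ii), suppose $\Lambda$ is a joint channel of $\Lambda_1,\Lambda_2$, so $\Lambda(A\otimes\id) = \Lambda_1(A)$ and $\Lambda(\id\otimes B) = \Lambda_2(B)$. Take a minimal Stinespring representation $W:\hin \to (\hout_1\otimes\hout_2)\otimes\hik$ of $\Lambda$, so $\Lambda(X) = W^*(X\otimes\id_\hik)W$. Setting $X = \id_{\hout_1}\otimes B$ shows $\Lambda_2(B) = W^*(\id_{\hout_1}\otimes B\otimes\id_\hik)W$, which is exactly a Stinespring representation of $\Lambda_2$ with "environment" space $\hout_1\otimes\hik$ and isometry $W$ (reading $\hout_2$ as the output and $\hout_1\otimes\hik$ as the ancilla). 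The conjugate channel of $\Lambda_2$ associated with this representation is $B' \mapsto W^*(B'\otimes\id_{\hout_2})W$ for $B'\in\lin(\hout_1\otimes\hik)$ — up to reordering tensor factors — and composing it with the channel $\Theta:\lin(\hout_1)\to\lin(\hout_1\otimes\hik)$, $\Theta(A) = A\otimes\id_\hik$ (or, to keep it unital in the right normalization, $A\mapsto A\otimes\rho$ traced appropriately — I will use the embedding that reproduces $\Lambda(A\otimes\id)$) recovers $W^*(A\otimes\id_{\hout_2}\otimes\id_\hik)W = \Lambda(A\otimes\id) = \Lambda_1(A)$. Hence $\Lambda_1 \pleq \bar{\Lambda}_2$ for this particular conjugate channel, and since all conjugate channels of $\Lambda_2$ are concatenation equivalent, $\Lambda_1 \pleq \bar{\Lambda}_2$ holds unambiguously.

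The main obstacle I anticipate is the bookkeeping in (i)$\Rightarrow$(ii): one must carefully identify the Stinespring representation of $\Lambda$, viewed as producing $\Lambda_2$, and check that the induced conjugate channel, after the right permutation of tensor legs and a suitable unital ancilla-embedding channel, genuinely post-composes to give $\Lambda_1$ rather than something that merely agrees with it after a further partial trace. Getting the isometry $\Theta$ to be an honest channel (completely positive and unital) while still reproducing $\Lambda(A\otimes\id)$ — rather than naively tensoring with the identity, which is not trace-normalized on the Schrödinger side / not the issue on the Heisenberg side but must be stated cleanly — is the delicate point; invoking Proposition \ref{prop:c-inverts-order} and the equivalence of all conjugate channels is what makes the argument robust to these choices.
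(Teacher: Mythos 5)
Your proposal is correct and follows essentially the paper's route: the paper proves (i)$\Rightarrow$(iii) by exactly your Stinespring argument for the joint channel (with the roles of $\Lambda_1,\Lambda_2$ swapped), uses Prop.~\ref{prop:c-inverts-order} together with $\bar{\bar{\Lambda}}\simeq\Lambda$ for the symmetry between (ii) and (iii), and gets (ii)$\Rightarrow$(i) by noting $\Lambda_2$ and $\bar{\Lambda}_2$ are compatible and invoking Prop.~\ref{prop:noisy}, which when unwound gives precisely your explicit joint channel $A\otimes B\mapsto V^*(B\otimes\Theta(A))V$. The point you flag as delicate is in fact harmless: in the Heisenberg picture the embedding $\Theta(A)=A\otimes\id_\hik$ is completely positive and unital, hence an honest channel (its Schr\"odinger dual is the partial trace), and the paper uses it directly without any further normalization.
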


\begin{proof}
(i)$\Rightarrow$(iii): 
Suppose that $\Lambda_1$ and $\Lambda_2$ are compatible. 
Then there exists a channel $\Lambda$ such that 
$\Lambda(A\otimes \id) = \Lambda_1(A)$ and 
$\Lambda(\id \otimes B) = \Lambda_2(B)$ for all $A\in\lin(\hout_1)$ and $B\in\lin(\hout_2)$. 
Let us fix a Stinespring representation $(V,\hik)$ of $\Lambda$, so that
\begin{equation}
\Lambda(A\otimes B) = V^*(A\otimes B\otimes \id_{\hik})V 
\end{equation}
for all $A\in\mathcal{L}(\hout_1)$ and $B\in\mathcal{L}(\hout_2)$. 
We have
\begin{equation}
\Lambda_1(A) = V^*(A\otimes \id_{\hout_2}\otimes \id_{\hik})V \, ,
\end{equation}
hence $(V,\hik)$ is also a Stinespring representation of $\Lambda_1$.
The conjugate channel $\bar{\Lambda}_1$ related to this representation is written as
\begin{equation}
\bar{\Lambda}_1(B \otimes C) = V^*(\id_{\hout_1} \otimes B \otimes C)V \, .
\end{equation}
We define a channel $\Theta:\lin(\hout_2) \to \lin(\hout_2 \otimes \hik)$ by $\Theta(B ) = B\otimes \id_{\hik}$. 
Then $\bar{\Lambda}_1 \circ \Theta = \Lambda_2$ holds, and hence $\Lambda_2 \pleq \bar{\Lambda}_1$.
\newline
(iii)$\Rightarrow$(ii): Follows from Prop. \ref{prop:c-inverts-order}.
\newline
(ii)$\Rightarrow$(i): By the definition, $\Lambda_2$ and $\bar{\Lambda}_2$ are compatible. Assuming that $\Lambda_1 \pleq \bar{\Lambda}_2$, it follows from Prop. \ref{prop:noisy} that also $\Lambda_1$ and $\Lambda_2$ are compatible.
\end{proof}

\begin{figure}
    \subfigure[]
    {
         \includegraphics[width=5cm]{joint.png}      
         }
    \subfigure[]
    {
        \includegraphics[width=6cm]{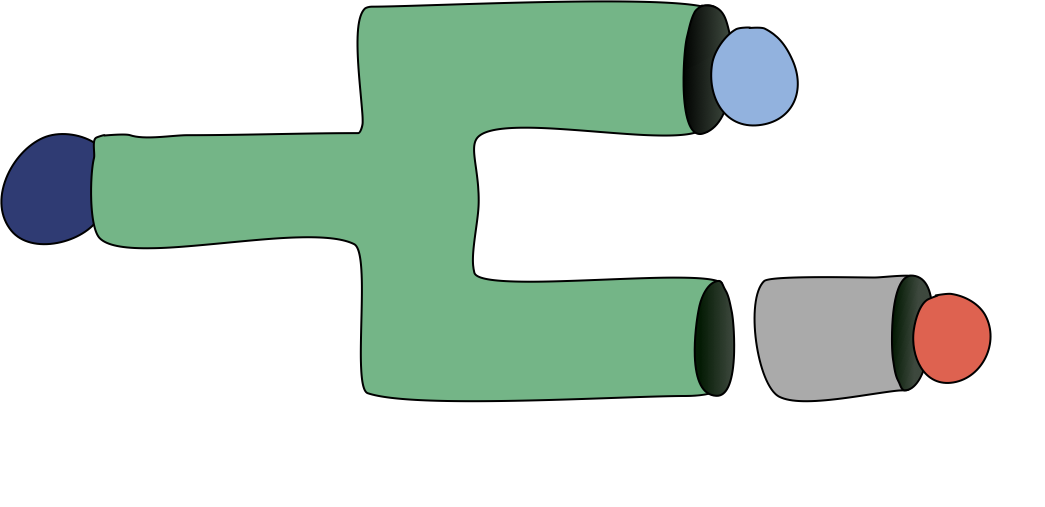} 

    }
        \caption{ \label{fig:tubes-2} The content of Prop. \ref{prop:compatible} in the Schr\"odinger picture. (a) By the definition, a compatible pair of channels has a joint channel. (b) The joint channel can be splitted into an isometric joint channel and local concatenation.}
\end{figure}

\subsection{Joint measurability}

We will next demonstrate that the usual definition of joint measurability of observables is a specific instance of Def. \ref{def:compatible}.
We recall that two observables $\M$ and $\N$ are \emph{jointly measurable} if there exists an observable $\G$ on the Cartesian product set $\Omega_\M \times \Omega_\N$ such that
\begin{equation}\label{eq:joint-obs}
\sum_y \G(x,y) = \M(x) \, , \quad \sum_x \G(x,y) = \N(y) \, ;
\end{equation}
otherwise they are incompatible.
The condition \eqref{eq:joint-obs} just means that the marginals of the probability distribution $\tr{\varrho \G(x,y)}$ are $\tr{\varrho \M(x)}$ and $\tr{\varrho \N(y)}$ for all states $\varrho \in \state{\hin}$.  
The observable $\G$ is called a \emph{joint observable} of $\M$ and $\N$.

\begin{proposition}
Let $\M$ and $\N$ be two observables.
The following are equivalent:
\begin{enumerate}[(i)]
\item $\M$ and $\N$ are jointly measurable;
\item $\Gamma_\M$ and $\Gamma_\N$ are compatible.
\end{enumerate}
\end{proposition}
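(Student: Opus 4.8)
The plan is to establish the equivalence directly by translating between joint observables and joint channels, using the correspondence $\M \mapsto \Gamma_\M$ from the preceding subsection together with the tensor product structure of the output Hilbert spaces $\ltwo{\Omega_\M}$ and $\ltwo{\Omega_\N}$.

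First I would treat (i)$\Rightarrow$(ii). Suppose $\G$ is a joint observable of $\M$ and $\N$ on the product set $\Omega_\M \times \Omega_\N$. The natural candidate for a joint channel of $\Gamma_\M$ and $\Gamma_\N$ is $\Gamma_\G$, but one must be careful about output spaces: $\Gamma_\G$ has output space $\ltwo{\Omega_\M \times \Omega_\N}$, whereas a joint channel of $\Gamma_\M$ and $\Gamma_\N$ must have output space $\ltwo{\Omega_\M} \otimes \ltwo{\Omega_\N}$. The resolution is the canonical unitary identification $\ltwo{\Omega_\M \times \Omega_\N} \cong \ltwo{\Omega_\M} \otimes \ltwo{\Omega_\N}$ sending $\delta_{(x,y)}$ to $\delta_x \otimes \delta_y$. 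Under this identification I would define $\Lambda := \Gamma_\G$ and verify the marginal conditions \eqref{eq:marginals-h}: for $A \in \lin(\ltwo{\Omega_\M})$,
\begin{equation*}
\Lambda(A \otimes \id) = \sum_{x,y} \ip{\delta_x \otimes \delta_y}{(A \otimes \id)(\delta_x \otimes \delta_y)} \G(x,y) = \sum_{x,y} \ip{\delta_x}{A\delta_x} \G(x,y) = \sum_x \ip{\delta_x}{A\delta_x} \M(x) = \Gamma_\M(A) \, ,
\end{equation*}
using the marginal relation $\sum_y \G(x,y) = \M(x)$, and symmetrically for the second marginal. Hence $\Gamma_\M$ and $\Gamma_\N$ are compatible.

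For (ii)$\Rightarrow$(i), suppose $\Lambda : \lin(\ltwo{\Omega_\M} \otimes \ltwo{\Omega_\N}) \to \lin(\hin)$ is a joint channel. I would define, for each $(x,y) \in \Omega_\M \times \Omega_\N$,
\begin{equation*}
\G(x,y) := \Lambda(\kb{\delta_x}{\delta_x} \otimes \kb{\delta_y}{\delta_y}) \, .
\end{equation*}
Positivity of $\G(x,y)$ follows from positivity of $\Lambda$, and $\sum_{x,y} \G(x,y) = \Lambda(\id \otimes \id) = \id$ by unitality, so $\G$ is a genuine observable on the product set. Summing over $y$ and using $\sum_y \kb{\delta_y}{\delta_y} = \id_{\ltwo{\Omega_\N}}$ together with the marginal condition $\Lambda(A \otimes \id) = \Gamma_\M(A)$ applied to $A = \kb{\delta_x}{\delta_x}$ gives $\sum_y \G(x,y) = \Gamma_\M(\kb{\delta_x}{\delta_x}) = \M(x)$, since $\Gamma_\M(\kb{\delta_x}{\delta_x}) = \sum_{x'} \ip{\delta_x}{\delta_{x'}}\ip{\delta_{x'}}{\delta_x}\M(x') = \M(x)$; symmetrically $\sum_x \G(x,y) = \N(y)$. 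Thus $\G$ is a joint observable.

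The argument is essentially bookkeeping; the only point requiring genuine care — and thus the main (mild) obstacle — is handling the identification of the product measurable space with the tensor product Hilbert space consistently in both directions, making sure the Kronecker bases match up so that the marginal computations go through cleanly. Everything else is a direct consequence of linearity, positivity, and unitality of the channels involved.
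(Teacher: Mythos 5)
Your proposal is correct and follows essentially the same route as the paper: in both directions you use exactly the paper's constructions, namely the joint channel $A \mapsto \sum_{x,y}\ip{\delta_x\otimes\delta_y}{A\,\delta_x\otimes\delta_y}\G(x,y)$ (i.e.\ $\Gamma_\G$ under the canonical identification $\ltwo{\Omega_\M\times\Omega_\N}\cong\ltwo{\Omega_\M}\otimes\ltwo{\Omega_\N}$) and the joint observable $\G(x,y)=\Lambda(\kb{\delta_x}{\delta_x}\otimes\kb{\delta_y}{\delta_y})$. The only difference is that you spell out the marginal, positivity and normalization verifications that the paper leaves to the reader, and these are all correct.
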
 

\begin{proof}
(i)$\Rightarrow$(ii):
Let $\G$ be a joint observable of $\M$ and $\N$.
We define a channel $\Lambda$ as
\begin{equation}
\Lambda(A) := \sum_{x,y} \ip{\delta_x \otimes \delta_y} {A \delta_x \otimes\delta_y} \G(x,y) \, .
\end{equation}
Then $\Lambda$ is a joint channel of $\Gamma_\M$ and $\Gamma_\N$.

(ii)$\Rightarrow$(i):
Let $\Lambda$ be a joint channel of $\Gamma_\M$ and $\Gamma_\N$.
We define an observable $\G$ as
\begin{equation*}
\G(x,y) := \Lambda(\kb{\delta_x}{\delta_x} \otimes \kb{\delta_y}{\delta_y} ) \, .
\end{equation*}
Then $\G$ is a joint observable of $\M$ and $\N$.
\end{proof}

Let us then note that the content of our earlier Prop. \ref{prop:noisy} can be rephrased as follows: if two channels $\Phi_1$ and $\Phi_2$ are incompatible, then two channels $\Lambda_1,\Lambda_2$ satisfying $\Phi_1 \pleq \Lambda_1$ and $\Phi_2 \pleq \Lambda_2$ are also incompatible.
Since the incompatibility of two observables may be easier to check than the incompatibility of two channels, this implication can be used as a sufficient condition for incompatibility.
Although the following result is a consequence of Prop. \ref{prop:noisy}, we write its short proof explicitly for the sake of clarity.

\begin{proposition}\label{prop:sufficient}
Let $\Lambda_1,\Lambda_2 \in \chin$ and let $\M,\N$ be two observables.
If the observables $\Lambda_1(\M)$ and $\Lambda_2(\N)$ are incompatible, then $\Lambda_1$ and $\Lambda_2$ are incompatible.
\end{proposition}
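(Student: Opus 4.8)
The plan is to prove the contrapositive: assume that $\Lambda_1$ and $\Lambda_2$ are compatible and produce an explicit joint observable for $\Lambda_1(\M)$ and $\Lambda_2(\N)$. So I would start by fixing a joint channel $\Lambda:\lin(\hout_1\otimes\hout_2)\to\lin(\hin)$, i.e.\ a channel satisfying $\Lambda(A\otimes\id)=\Lambda_1(A)$ and $\Lambda(\id\otimes B)=\Lambda_2(B)$ for all $A\in\lin(\hout_1)$, $B\in\lin(\hout_2)$, and then define a map $\G$ on $\Omega_\M\times\Omega_\N$ by
\begin{equation*}
\G(x,y) := \Lambda(\M(x)\otimes\N(y)) \, .
\end{equation*}

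Next I would check that $\G$ is an observable and has the right marginals. Positivity of $\G(x,y)$ follows since $\M(x)\otimes\N(y)\geq\nul$ (a tensor product of positive operators) and $\Lambda$ is positive; normalization follows from $\sum_{x,y}\G(x,y)=\Lambda\big(\sum_x\M(x)\otimes\sum_y\N(y)\big)=\Lambda(\id\otimes\id)=\id$, using the normalization of $\M$ and $\N$ and the unitality of $\Lambda$. For the marginal conditions \eqref{eq:joint-obs}, linearity of $\Lambda$ together with $\sum_y\N(y)=\id$ gives
\begin{equation*}
\sum_y\G(x,y) = \Lambda\Big(\M(x)\otimes\sum_y\N(y)\Big) = \Lambda(\M(x)\otimes\id) = \Lambda_1(\M(x)) = \Lambda_1(\M)(x) \, ,
\end{equation*}
and symmetrically $\sum_x\G(x,y)=\Lambda_2(\N)(y)$. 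Hence $\G$ is a joint observable of $\Lambda_1(\M)$ and $\Lambda_2(\N)$, so these observables are jointly measurable, which is exactly what the contrapositive demands.

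A more economical route, if one prefers to reuse the machinery already built, is to note that $\Lambda_j\circ\Gamma_\M=\Gamma_{\Lambda_j(\M)}$ gives $\Gamma_{\Lambda_1(\M)}\pleq\Lambda_1$ and $\Gamma_{\Lambda_2(\N)}\pleq\Lambda_2$; Prop.\ \ref{prop:noisy} then carries compatibility of $\Lambda_1,\Lambda_2$ down to compatibility of $\Gamma_{\Lambda_1(\M)},\Gamma_{\Lambda_2(\N)}$, and the equivalence between joint measurability of observables and compatibility of the associated channels closes the argument. I do not anticipate a real obstacle here; the only place asking for any care is the positivity of $\G(x,y)$, which hinges on the positivity of $\Lambda$ applied to the positive operator $\M(x)\otimes\N(y)$ on the product space, and the bookkeeping with the partial sums $\sum_y\N(y)=\id$ and $\sum_x\M(x)=\id$.
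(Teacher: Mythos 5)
Your proposal is correct and matches the paper's proof essentially verbatim: the paper also takes a joint channel $\Lambda$ of $\Lambda_1,\Lambda_2$ and defines the joint observable $\G(x,y)=\Lambda(\M(x)\otimes\N(y))$, checking the marginal conditions (your extra verification of positivity and normalization is fine, just more explicit). Your alternative route via Prop.~\ref{prop:noisy} is also the one the paper itself points out just before stating the result.
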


\begin{proof}
Let us assume that $\Lambda_1$ and $\Lambda_2$ are compatible and let $\Lambda$ be their joint channel.
We define an observable $\G$ on $\Omega_\M \times \Omega_\N$ as
\begin{equation}
\G(x,y) = \Lambda (\M(x) \otimes \N(y) ) \, .
\end{equation}
Then 
\begin{align}
\sum_y \G(x,y)  = \Lambda_1(\M(x)) \, , \quad \sum_x \G(x,y) = \Lambda_2(\N(y)) \, , 
\end{align}
hence, $\G$ is a joint observable of $\Lambda_1(\M)$ and $\Lambda_2(\N)$.
\end{proof}

Using Prop. \ref{prop:sufficient} and some known results for joint measurability of pairs of observables, we can conclude the incompatibility of some pairs of channels.
 The following example demonstrates this kind of reasoning.

\begin{example}(\emph{Incompatible Pauli channels})
Let $\sigma_x,\sigma_y,\sigma_z$ be the usual Pauli operators.
A \emph{Pauli channel} is a channel $\Psi_{\vec{p}}:\lin(\complex^2)\to\lin(\complex^2)$ of the form
\begin{equation}
\Psi_{\vec{p}}(A) = \sum_{j} p_j \sigma_j A \sigma_j + (1-\sum_j p_j) A \, , 
\end{equation}
where $0 \leq p_j \leq 1$ and $\sum_j p_j \leq 1$.
Let $\Psi_{\vec{p}}$ and $\Psi_{\vec{q}}$ be two Pauli channels.
We take two observables $\X$ and $\Y$, defined as
\begin{align}
\X(\pm 1) = \half (\id \pm \sigma_x) \, , \quad \Y(\pm 1) = \half (\id \pm \sigma_y) \, .
\end{align}
We then get
\begin{equation}
\Psi_{\vec{p}}(\X)(\pm 1) =  \half (\id \pm (1-2(p_y+p_z)) \sigma_x) 
\end{equation}
and
\begin{equation}
\Psi_{\vec{q}}(\Y)(\pm 1) =  \half (\id \pm (1-2(q_x+q_z)) \sigma_y) \, .
\end{equation}
As shown in \cite{Busch86} (see \cite{BuHe08} for an alternative proof), the observables $\Psi_{\vec{p}}(\X)$ and $\Psi_{\vec{q}}(\Y)$ are incompatible if and only if
\begin{equation}\label{eq:pauli-ineq}
p_y^2+p_z^2+q_x^2+q_z^2 > \frac{1}{4} \, .
\end{equation}
From Prop. \ref{prop:sufficient} we conclude that two Pauli channels $\Psi_{\vec{p}}$ and $\Psi_{\vec{q}}$ are incompatible whenever the inequality \eqref{eq:pauli-ineq} holds.
We note that the incompatibility of Pauli channels is related to the Pauli cloning of a qubit system \cite{Cerf00}. 
\end{example}

\subsection{Measurement disturbance}

An \emph{instrument} is a map $\I:\Omega\times\lhin \to \lhout$ such that each map $\I(x,\cdot)$ is linear completely positive map and their sum $\sum_x \I(x,\cdot)$ is trace preserving \cite{QTOS76}. 
Any measurement process gives rise to a unique instrument, and an instrument is related to a equivalence class of measurement processes \cite{Ozawa84}.
As in \cite{HeMiRe14}, we say that an observable $\M$ and a channel $\Lambda$ are \emph{compatible} if there is an instrument $\I:\Omega\times\lhin \to \lhout$ such that $\tr{\I(x,\varrho)}=\tr{\varrho \M(x)}$ and $\sum_x \I(x,\varrho)=\Lambda^*(\varrho)$.
This means that $\M$ and $\Lambda$ can describe the same measurement process.
The following observation shows that this usage of the notion compatibility is again consistent with our earlier definition.

\begin{proposition}\label{prop:instrument}
Let $\M$ be an observable and $\Lambda$ a channel.
The following are equivalent:
\begin{enumerate}[(i)]
\item $\M$ and $\Lambda$ are compatible;
\item $\Gamma_\M$ and $\Lambda$ are compatible.
\end{enumerate}
\end{proposition}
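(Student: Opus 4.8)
The plan is to translate directly between the instrument formalism of \cite{HeMiRe14} and the joint-channel formalism of Definition \ref{def:compatible}, exploiting the fact that $\Gamma^*_\M$ writes the measurement outcome into the orthonormal basis $\{\delta_x : x \in \Omega_\M\}$ of $\ltwo{\Omega_\M}$. I will work in the Schr\"odinger picture throughout. A joint channel of $\Gamma_\M$ and $\Lambda$ is then a channel $\Phi^*:\trhin \to \mathcal{T}(\ltwo{\Omega_\M}\otimes\hout)$ whose marginals satisfy $\ptr{\hout}{\Phi^*(\varrho)} = \Gamma^*_\M(\varrho)$ and $\ptr{\ltwo{\Omega_\M}}{\Phi^*(\varrho)} = \Lambda^*(\varrho)$, while an instrument realizing $\Lambda$ and compatible with $\M$ is a family $\varrho\mapsto\I(x,\varrho)$ of completely positive maps $\lhin\to\lhout$ with $\tr{\I(x,\varrho)} = \tr{\varrho\M(x)}$ and $\sum_x\I(x,\varrho) = \Lambda^*(\varrho)$.

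For (i)$\Rightarrow$(ii), given such an instrument $\I$, I would set
\[
\Phi^*(\varrho) = \sum_x \kb{\delta_x}{\delta_x}\otimes\I(x,\varrho) \, .
\]
This is completely positive as a sum of the CP maps $\I(x,\cdot)$ tensored with the fixed positive operators $\kb{\delta_x}{\delta_x}$, and it is trace preserving because $\sum_x\tr{\I(x,\varrho)} = \tr{\Lambda^*(\varrho)} = \tr{\varrho}$. Tracing out $\hout$ gives $\sum_x \tr{\varrho\M(x)}\kb{\delta_x}{\delta_x} = \Gamma^*_\M(\varrho)$, and tracing out $\ltwo{\Omega_\M}$ gives $\sum_x\I(x,\varrho) = \Lambda^*(\varrho)$, so $\Phi$ is a joint channel of $\Gamma_\M$ and $\Lambda$.

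For (ii)$\Rightarrow$(i), given a joint channel $\Phi^*$, I would define
\[
\I(x,\varrho) = (\bra{\delta_x}\otimes\id_\hout)\,\Phi^*(\varrho)\,(\ket{\delta_x}\otimes\id_\hout) \, ,
\]
a positive operator on $\hout$. Each map $\I(x,\cdot)$ is CP, being $\Phi^*$ followed by a compression, and $\sum_x\I(x,\varrho) = \ptr{\ltwo{\Omega_\M}}{\Phi^*(\varrho)} = \Lambda^*(\varrho)$, which is trace preserving; hence $\I$ is an instrument realizing $\Lambda$. Finally, using the other marginal condition together with the diagonal form of $\Gamma^*_\M$,
\[
\tr{\I(x,\varrho)} = \tr{(\kb{\delta_x}{\delta_x}\otimes\id_\hout)\Phi^*(\varrho)} = \bra{\delta_x}\Gamma^*_\M(\varrho)\ket{\delta_x} = \tr{\varrho\M(x)} \, ,
\]
so $\M$ and $\Lambda$ are compatible.

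I do not anticipate a genuine obstacle: both directions are essentially forced, since the $\ltwo{\Omega_\M}$-register of any joint channel is automatically block-diagonal in the $\delta_x$ basis (its marginal being the diagonal state $\Gamma^*_\M(\varrho)$), and this block decomposition is precisely the instrument. The only point that needs care is the Heisenberg/Schr\"odinger bookkeeping — in particular verifying that the maps $\I(x,\cdot)$ take values in $\lhout$ and that their normalization is inherited from trace preservation of $\Lambda^*$ rather than being imposed separately.
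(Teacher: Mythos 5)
Your proof is correct and follows essentially the same route as the paper: your Schr\"odinger-picture constructions $\Phi^*(\varrho)=\sum_x \kb{\delta_x}{\delta_x}\otimes\I(x,\varrho)$ and $\I(x,\varrho)=(\bra{\delta_x}\otimes\id)\Phi^*(\varrho)(\ket{\delta_x}\otimes\id)$ are exactly the duals of the paper's Heisenberg-picture formulas $\Phi(\kb{\delta_x}{\delta_y}\otimes A)=\delta_{x,y}\I_x(A)$ and $\I(x,A)=\Phi(\kb{\delta_x}{\delta_x}\otimes A)$. You simply spell out the marginal and trace checks in more detail than the paper does.
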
 

\begin{proof}
(ii)$\Rightarrow$(i):
Let $\Phi$ be a joint channel of $\Gamma_\M$ and $\Lambda$.
We define
\begin{equation}
\I(x,A) := \Phi (\kb{\delta_x}{\delta_x}) \otimes A \, .
\end{equation}
Then $\I$ is an instrument.

(i)$\Rightarrow$(ii):
Let $\I$ be an instrument such that $\tr{\I(x,\varrho)}=\tr{\varrho \M(x)}$ and $\sum_x \I(x,\varrho)=\Lambda^*(\varrho)$. We define
\begin{equation}
\Phi(\kb{\delta_x}{\delta_y} \otimes A) := \delta_{x,y} \I_x(A) \, .
\end{equation}
Then $\Phi$ is a joint channel of $\Gamma_\M$ and $\Lambda$.
\end{proof}

If we aim to measure $\M$ and we want to disturb the system as little as possible, we should choose an instrument $\I$ such that the corresponding channel $\Lambda$ is as high in the concatenation relation as possible. 
By Prop. \ref{prop:instrument} we are searching $\Lambda$ among the channels compatible with $\Gamma_\M$, and by Prop. \ref{prop:compatible} we should thus choose a conjugate channel of $\Gamma_\M$.
To write a conjugate channel for $\Gamma_\M$, we recall that any observable $\M$ has a \emph{Naimark dilation} (see e.g. \cite{CBMOA03}), i.e., a triplet $(\hik, \hat{\M}, K)$ where  $\hik$ is a Hilbert space, $K: \hin \to \hik$ is an isometry and $\hat{\M}$ is a sharp observable on $\hik$ satisfying $K^*\hat{\M}(x)K=\M(x)$ for each $x \in \Omega_{\M}$.
A Stinespring dilation of $\Gamma_\M$ is now obtained by defining an isometry $V$ as
\begin{equation}
V: \hin \to \ltwo{\Omega_\M} \otimes \hik \, , \quad V\psi = \sum_x \delta_x \otimes \hat{\M}(x) K \psi \, .
\end{equation}
The conjugate channel of $\Gamma_\M$ related to this Stinespring dilation, denoted by $\Lambda_\M$, is
\begin{eqnarray}\label{eq:LambdaA-h} 
\Lambda_{\M} (T) =\sum_x  K^\ast \hat{\M}(x) T \hat{\M}(x) K  \, ,
 \end{eqnarray}
 or in the Schr\"odinger picture
\begin{eqnarray}\label{eq:LambdaA-s} 
\Lambda^*_{\M} (\varrho) =\sum_x  \hat{\M}(x) K \varrho K^\ast \hat{\M}(x)  \, .
 \end{eqnarray}
 We say that $\Lambda_\M$ is the \emph{least disturbing channel} for $\M$.

Combining these observations with Prop. \ref{prop:c-inverts-order} and Prop. \ref{prop:compatible}, we have recovered the \emph{qualitative noise-disturbance relation}, first presented in \cite{HeMi13}. 

\begin{corollary}\label{cor:noise-dist}
Let $\M$ and $\N$ be two observables.
The following are equivalent:
\begin{enumerate}[(i)]
\item $\N$ is a post-processing of $\M$;
\item  $\Lambda_\M \pleq \Lambda_\N$;
\item If a channel $\Lambda$ is compatible $\M$, then it is also compatible with $\N$.
\end{enumerate}
\end{corollary}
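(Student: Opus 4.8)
The plan is to reduce all three conditions to statements about the channels $\Gamma_\M$ and $\Gamma_\N$ and their conjugates, using two facts already available in the text: that $\Lambda_\M$ is, by its very construction from a Naimark dilation, a conjugate channel of $\Gamma_\M$ (so $\Lambda_\M \simeq \bar{\Gamma}_\M$, and likewise $\Lambda_\N \simeq \bar{\Gamma}_\N$), and that by the proposition relating post-processing to the concatenation preorder, condition (i) is equivalent to $\Gamma_\N \pleq \Gamma_\M$. Once this dictionary is in place, everything follows by combining Prop.~\ref{prop:c-inverts-order}, Prop.~\ref{prop:compatible} and Prop.~\ref{prop:instrument}.

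For the equivalence (i)$\Leftrightarrow$(ii), I would start from $\Gamma_\N \pleq \Gamma_\M$ and apply Prop.~\ref{prop:c-inverts-order} to obtain the equivalent statement $\bar{\Gamma}_\M \pleq \bar{\Gamma}_\N$; since all conjugate channels of a fixed channel are concatenation equivalent, this is the same as $\Lambda_\M \pleq \Lambda_\N$, which is (ii). For (i)$\Rightarrow$(iii), assume $\Gamma_\N \pleq \Gamma_\M$ and let $\Lambda$ be compatible with $\M$; by Prop.~\ref{prop:instrument} the channels $\Gamma_\M$ and $\Lambda$ are compatible, so Prop.~\ref{prop:compatible} gives $\Gamma_\M \pleq \bar{\Lambda}$, and transitivity of $\pleq$ yields $\Gamma_\N \pleq \bar{\Lambda}$; hence $\Gamma_\N$ and $\Lambda$ are compatible by Prop.~\ref{prop:compatible}, and $\N$ and $\Lambda$ are compatible by Prop.~\ref{prop:instrument}. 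For the converse (iii)$\Rightarrow$(i), I would apply the hypothesis to the least disturbing channel $\Lambda_\M$ itself: it is a conjugate channel of $\Gamma_\M$, hence compatible with $\Gamma_\M$ (any channel is compatible with its conjugate), hence compatible with $\M$ by Prop.~\ref{prop:instrument}; so by (iii) it is compatible with $\N$, therefore with $\Gamma_\N$ by Prop.~\ref{prop:instrument}, therefore $\Gamma_\N \pleq \bar{\Lambda}_\M$ by Prop.~\ref{prop:compatible}; and since $\bar{\Lambda}_\M \simeq \bar{\bar{\Gamma}}_\M \simeq \Gamma_\M$, we conclude $\Gamma_\N \pleq \Gamma_\M$, which is (i).

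The only genuinely delicate point is the bookkeeping of the concatenation equivalences among the various conjugate channels — ensuring that facts established for $\bar{\Gamma}_\M$ transfer to $\Lambda_\M$ and back, and that $\bar{\bar{\Gamma}}_\M$ may be replaced by $\Gamma_\M$. Both are supplied by the earlier remark that all conjugate channels of a given channel are equivalent in the concatenation preorder, together with the explicit identification of $\Lambda_\M$ as the conjugate channel arising from the Naimark/Stinespring construction; no new estimate or construction is needed beyond what the three cited propositions already provide.
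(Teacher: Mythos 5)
Your proof is correct and takes essentially the same route the paper intends: the corollary is stated there as a direct combination of the post-processing characterization (Proposition 1), Prop.~\ref{prop:c-inverts-order}, Prop.~\ref{prop:compatible} and Prop.~\ref{prop:instrument}, together with the identification of $\Lambda_\M$ as a conjugate channel of $\Gamma_\M$, and your argument is precisely that combination written out, with the concatenation-equivalence bookkeeping ($\Lambda_\M \simeq \bar{\Gamma}_\M$, $\bar{\bar{\Gamma}}_\M \simeq \Gamma_\M$) handled correctly.
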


The message of this result is that if we measure a noisier observable instead of a sharper one, then we can choose a measurement process that disturbs the input state less.
This qualitative statement is meaningful even without a specific quantification of disturbance since it connects to the whole sets of compatible channels of the compared observables.

\subsection{Incompatibility of several channels}\label{sec:more}

A quantum broadcast channel can have a total output space which is a tensor product of not only two output spaces but many of them.
This generalization leads to some new aspects.
The concepts of compatibility and incompatibility have the following direct generalizations for any finite number of channels. 

\begin{definition}\label{def:compatible}
Let $\Lambda_j:\lin(\hout_j) \to \lin(\hin)$, $j=1,\ldots,n$, be channels.
If there exists a channel 
\begin{equation}
\Lambda:\lin(\hout_1 \otimes \hout_2 \otimes \cdots \otimes \hout_n)\to\lin(\hin)
\end{equation}
 such that 
 \begin{equation}\label{eq:n}
 \begin{aligned}
\Lambda_1(A_1) &= \Lambda(A_1 \otimes \id \otimes \cdots \otimes \id)  \\
\Lambda_2(A_2) &= \Lambda(\id \otimes A_2 \otimes \cdots \otimes \id) \\
&\vdots \\
\Lambda_n(A_n) &= \Lambda(\id \otimes \id \otimes \cdots \otimes A_n)
\end{aligned}
\end{equation}
for all $A_j \in \lin(\hout_j)$, then $\Lambda_1,\ldots,\Lambda_n$ are \emph{compatible} and $\Lambda$ is their \emph{joint channel}.
Otherwise $\Lambda_1,\ldots,\Lambda_n$ are \emph{incompatible}.
\end{definition}

The following is a direct generalization of Prop. \ref{prop:noisy}.
The proof is similar and we thus omit it.

\begin{proposition}\label{prop:noisy-2}
Let $\Lambda_j,\Phi_j \in \chin$, $j=1,\ldots,n$, be channels such that $\Phi_j \pleq \Lambda_j$ for every $j=1,\ldots,n$.
If $\Lambda_1,\ldots,\Lambda_n$ are compatible, then also $\Phi_1,\ldots,\Phi_n$ are compatible.
\end{proposition}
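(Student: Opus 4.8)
The plan is to mirror the proof of Proposition \ref{prop:noisy} verbatim, upgrading the two-factor tensor product to an $n$-factor one. By hypothesis, for each $j$ there is a channel $\Theta_j$ with $\Phi_j = \Lambda_j \circ \Theta_j$, and since $\Lambda_1,\ldots,\Lambda_n$ are compatible there is a joint channel $\Lambda:\lin(\hout_1 \otimes \cdots \otimes \hout_n) \to \lin(\hin)$ satisfying the marginal conditions \eqref{eq:n}. The natural candidate for a joint channel of $\Phi_1,\ldots,\Phi_n$ is then
\begin{equation*}
\Phi := \Lambda \circ (\Theta_1 \otimes \cdots \otimes \Theta_n) \, .
\end{equation*}
Here $\Theta_1 \otimes \cdots \otimes \Theta_n$ is the $n$-fold tensor product channel, which is a channel by iterating the construction in Section 2.4, and the composition with $\Lambda$ is well defined because the input space of $\Lambda$ matches the output space of the tensor product.

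The key step is to verify the $n$ marginal conditions. Fix $j$ and $A_j \in \lin(\hout_j^{\mathrm{out}})$; here I write $\hout_j$ loosely for the $j$-th output space of the $\Phi$'s, noting that $\Phi_j$ and $\Lambda_j$ share an output space since $\Theta_j$ is the concatenating map. Feeding $\id \otimes \cdots \otimes A_j \otimes \cdots \otimes \id$ through $\Theta_1 \otimes \cdots \otimes \Theta_n$ gives $\id \otimes \cdots \otimes \Theta_j(A_j) \otimes \cdots \otimes \id$, using unitality $\Theta_i(\id)=\id$ on the other factors, and then applying $\Lambda$ and its $j$-th marginal condition from \eqref{eq:n} yields
\begin{equation*}
\Phi(\id \otimes \cdots \otimes A_j \otimes \cdots \otimes \id) = \Lambda(\id \otimes \cdots \otimes \Theta_j(A_j) \otimes \cdots \otimes \id) = \Lambda_j(\Theta_j(A_j)) = \Phi_j(A_j) \, .
\end{equation*}
Since this holds for every $j$ and every $A_j$, the channel $\Phi$ is a joint channel of $\Phi_1,\ldots,\Phi_n$, so they are compatible.

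There is essentially no obstacle here; the only points requiring a word of care are that the iterated tensor product of channels is again a channel (immediate from the $n=2$ case in Section 2.4 by induction, or directly since the product operators span $\lin(\hout_1 \otimes \cdots \otimes \hout_n)$) and that the unitality of each $\Theta_i$ is what lets the identity operators on the other tensor factors pass through unchanged. This is precisely why the authors remark that the proof is similar and omit it; the write-up above would be the natural short argument to supply if one wanted it spelled out.
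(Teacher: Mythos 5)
Your argument is correct and is precisely the proof the authors omit: it mirrors Prop.~\ref{prop:noisy}, taking $\Phi = \Lambda \circ (\Theta_1 \otimes \cdots \otimes \Theta_n)$ and using unitality of each $\Theta_i$ to pass the identity factors through and verify the $n$ marginal conditions of \eqref{eq:n}. The only slip is immaterial: $\Phi_j$ and $\Lambda_j$ need not share an output space (in the Heisenberg picture $\Theta_j$ maps the output algebra of $\Phi_j$ into that of $\Lambda_j$), but the computation is unaffected.
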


As was shown in Prop. \ref{prop:compatible}, the compatibility of two channels has a neat characterization in terms of the concatenation preorder and conjugation. 
As expected, concatenation and conjugation are still closely related to the compatibility of more than two channels.
However, the characterization is now more involved.

\begin{proposition}\label{prop:three}
For three channels $\Lambda_1$, $\Lambda_2$ and $\Lambda_3$, the following statements are equivalent. 
\begin{itemize}
\item[(i)] 
$\Lambda_1$, $\Lambda_2$ and $\Lambda_3$ are compatible.
\item[(ii)]
There exist compatible channels $\mathcal{E}_2$ and $\mathcal{E}_3$ 
such that
$\Lambda_2 = \bar{\Lambda}_1 \circ \mathcal{E}_2$
and $\Lambda_3 = \bar{\Lambda}_1 \circ \mathcal{E}_3$.  
\end{itemize} 
\end{proposition}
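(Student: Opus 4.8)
The plan is to mimic the structure of the proof of Proposition \ref{prop:compatible}, which handled the two-channel case by identifying a joint channel with a Stinespring dilation of $\Lambda_1$ and then peeling off the remaining two outputs as a local post-processing of $\bar\Lambda_1$. Here the joint channel $\Lambda$ has output space $\hout_1\otimes\hout_2\otimes\hout_3$, so after fixing a Stinespring representation $(V,\hik)$ of $\Lambda$, the pair $(V,\hik)$ is simultaneously a Stinespring representation of $\Lambda_1$ (regarding $\hout_2\otimes\hout_3\otimes\hik$ as the ancilla), and the conjugate channel $\bar\Lambda_1$ attached to it acts on $\lin(\hout_2\otimes\hout_3\otimes\hik)$ by $\bar\Lambda_1(C)=V^*(\id_{\hout_1}\otimes C)V$. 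This is the single structural observation that makes everything else routine.

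For (i)$\Rightarrow$(ii): given the joint channel $\Lambda$ and the representation $(V,\hik)$ as above, define $\mathcal{E}_2:\lin(\hout_2)\to\lin(\hout_2\otimes\hout_3\otimes\hik)$ by $\mathcal{E}_2(A_2)=A_2\otimes\id_{\hout_3}\otimes\id_\hik$ and $\mathcal{E}_3:\lin(\hout_3)\to\lin(\hout_2\otimes\hout_3\otimes\hik)$ by $\mathcal{E}_3(A_3)=\id_{\hout_2}\otimes A_3\otimes\id_\hik$. Then a one-line computation gives $\bar\Lambda_1\circ\mathcal{E}_2(A_2)=V^*(\id_{\hout_1}\otimes A_2\otimes\id_{\hout_3}\otimes\id_\hik)V=\Lambda(\id\otimes A_2\otimes\id)=\Lambda_2(A_2)$, and similarly $\bar\Lambda_1\circ\mathcal{E}_3=\Lambda_3$. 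Moreover $\mathcal{E}_2$ and $\mathcal{E}_3$ are compatible: the map $\mathcal{E}(A_2\otimes A_3)=A_2\otimes A_3\otimes\id_\hik$ is a joint channel for them (it is of the form $W^*(\,\cdot\,)W$ for an isometry, hence a channel, and its two marginals are exactly $\mathcal{E}_2$ and $\mathcal{E}_3$). This establishes (ii).

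For (ii)$\Rightarrow$(i): suppose $\mathcal{E}_2,\mathcal{E}_3$ are compatible with joint channel $\mathcal{E}$, and $\Lambda_2=\bar\Lambda_1\circ\mathcal{E}_2$, $\Lambda_3=\bar\Lambda_1\circ\mathcal{E}_3$. Fix a Stinespring representation $(V,\hik)$ of $\Lambda_1$; then $\bar\Lambda_1(C)=V^*(\id_{\hout_1}\otimes C)V$ for $C\in\lin(\hik)$. The map $A_1\otimes C\mapsto V^*(A_1\otimes C)V$ is an (isometric, hence genuine) broadcast channel $\lin(\hout_1\otimes\hik)\to\lin(\hin)$ whose two marginals are $\Lambda_1$ and $\bar\Lambda_1$; now concatenate its second leg with $\mathcal{E}$ to obtain $\Lambda(A_1\otimes A_2\otimes A_3):=V^*(A_1\otimes\mathcal{E}(A_2\otimes A_3))V$. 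Checking the three marginal conditions \eqref{eq:n} is then immediate: setting $A_2=\id,A_3=\id$ gives $V^*(A_1\otimes\id_\hik)V=\Lambda_1(A_1)$ since $\mathcal{E}$ is unital; setting $A_1=\id,A_3=\id$ gives $V^*(\id_{\hout_1}\otimes\mathcal{E}_2(A_2))V=\bar\Lambda_1(\mathcal{E}_2(A_2))=\Lambda_2(A_2)$, using that the $A_2$-marginal of $\mathcal{E}$ is $\mathcal{E}_2$; and symmetrically for $A_3$. Hence $\Lambda$ is a joint channel of $\Lambda_1,\Lambda_2,\Lambda_3$.

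The only point requiring a little care — and the place I expect the real content to sit — is the bookkeeping in (ii)$\Rightarrow$(i): one must verify that $\Lambda$ as defined is genuinely well-defined and completely positive on all of $\lin(\hout_1\otimes\hout_2\otimes\hout_3)$, not merely on product operators. This follows because $\Lambda=\mathrm{Ad}_{V}\circ(\mathrm{id}_{\lin(\hout_1)}\otimes\,\mathcal{E})$ is a composition of the channel $\mathrm{id}\otimes\mathcal{E}$ (a tensor product of channels, hence a channel by the discussion in Section \ref{sec:channels}) with the channel $C\mapsto V^*CV$, so no ad hoc extension argument is needed. Everything else is a direct unwinding of definitions, exactly parallel to Proposition \ref{prop:compatible}; the reason the statement looks "more involved" than the two-channel case is precisely that the residual freedom after fixing $\bar\Lambda_1$ is itself a compatibility condition on $\mathcal{E}_2,\mathcal{E}_3$ rather than a bare concatenation inequality.
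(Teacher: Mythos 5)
Your proposal is correct and takes essentially the same route as the paper: the (ii)$\Rightarrow$(i) direction coincides with the paper's construction $\Lambda(A\otimes B\otimes C)=V^*(A\otimes\mathcal{E}(B\otimes C))V$, and in (i)$\Rightarrow$(ii) you merely inline the argument of Prop.~\ref{prop:compatible} for the pair $\Lambda_1$ and the grouped marginal on $\hout_2\otimes\hout_3$ (taking $\mathcal{E}_2,\mathcal{E}_3$ to be the tensoring-with-identity embeddings, compatible via $A_2\otimes A_3\mapsto A_2\otimes A_3\otimes\id_{\hik}$) rather than citing it, as the paper does through $\Lambda_{23}\pleq\bar{\Lambda}_1$. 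Your additional remark that $\Lambda$ is a genuine channel, being the composition of $\mathrm{id}\otimes\mathcal{E}$ with $Y\mapsto V^*YV$, correctly fills a point the paper leaves implicit.
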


\begin{proof}
(ii) $\Rightarrow$ (i):
As $\mathcal{E}_2 $ and $\mathcal{E}_3$ are compatible,
there exists a channel $\Phi$ such that 
$\Phi(B \otimes \id) = \mathcal{E}_2(B)$ and 
$\Phi(\id \otimes C) = \mathcal{E}_3(C)$. 
Let us consider a Stinespring dilation of $\Lambda_1$ described by $(V, \hik)$. 
Then we can define a channel $\Lambda$ as 
\begin{equation}
\Lambda(A\otimes B \otimes C)
:= V^* (A \otimes \Phi(B \otimes C))V \, .
\end{equation}
The marginals of $\Lambda$ coincide with $\Lambda_1$, $\Lambda_2$, and $\Lambda_3$. 
\\
(i) $\Rightarrow$ (ii):
There is a channel $\Lambda$ such that the relevant conditions \eqref{eq:n} hold. 
Let us consider a channel $\Lambda_{23}$ defined by 
$\Lambda_{23}(B \otimes C) = \Lambda(\id \otimes B \otimes C)$.
It follows that $\Lambda_{23}$ and $\Lambda_1$ are compatible, 
hence $\Lambda_{23} \pleq \bar{\Lambda}_1$.
Therefore, there exists a channel $\Phi$ such that,  by using a Stinespring representation of $\Lambda_1$, we can write
\begin{equation}
\Lambda(\id\otimes B \otimes C) = V^*(\id \otimes \Phi(B \otimes C) )V \, .
\end{equation}
Defining $\mathcal{E}_2(B) := \Phi(B \otimes \id)$ 
and $\mathcal{E}_3(C):= \Phi(\id \otimes C)$, 
we obtain (ii). 
\end{proof}

The physical content of Prop. \ref{prop:three} is illustrated in Fig. \ref{fig:three}.

\begin{figure}
    \subfigure[]
    {
         \includegraphics[height=3cm]{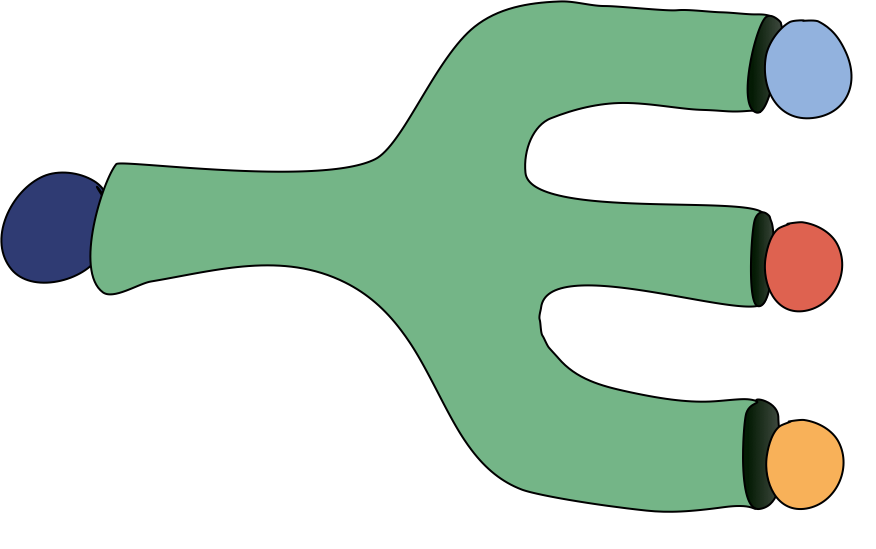}      
         }
    \subfigure[]
    {
        \includegraphics[height=3cm]{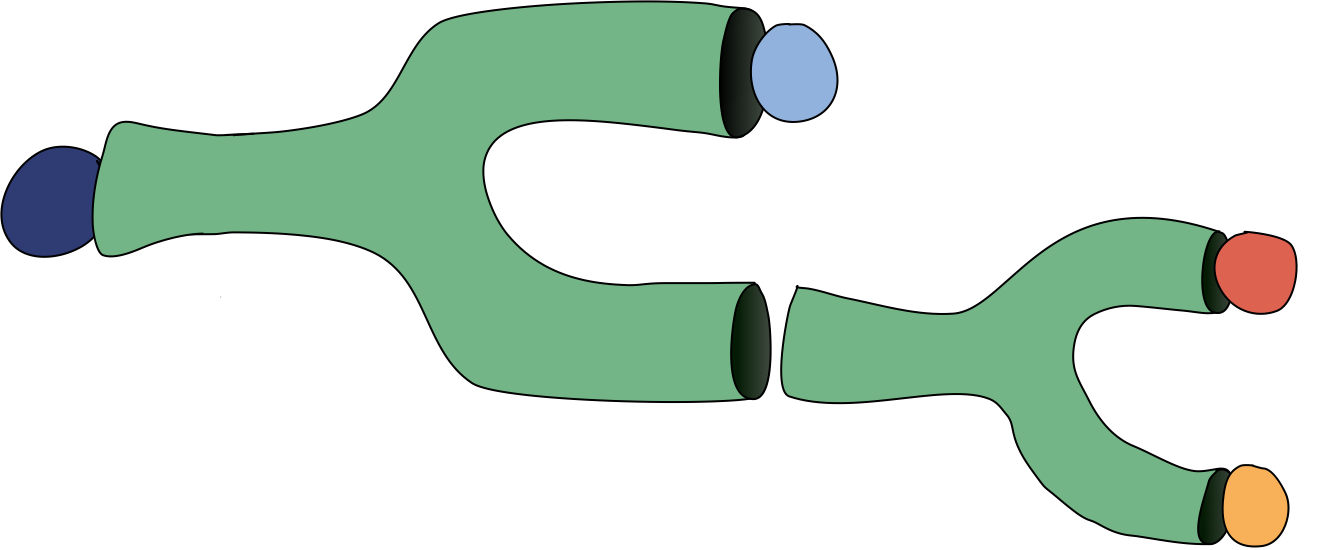} 

    }
        \caption{ \label{fig:three} Compatibility of three channels in the Schr\"odinger picture.
        (a) By the definition, there is a broadcast channel with three outputs. (b) By Prop.\ref{prop:three}, there exist two broadcast channels with two outputs that are combined together.}
\end{figure}

\section{Special types of channels}\label{sec:special}

\subsection{Completely depolarizing channels}

We recall that a channel $\Phi^*:\state{\hin} \to \state{\hout}$ is called \emph{completely depolarizing} if there is a state $\eta\in\state{\hout}$ such that 
\begin{equation}\label{eq:cd-s}
\Phi^*(\varrho)= \tr{\varrho} \eta
\end{equation}
for all input states $\varrho\in\state{\hin}$.
In the Heisenberg picture this formula reads
\begin{equation}\label{eq:cd}
\Phi(A)= \tr{\eta A} \id \, .
\end{equation}
We can easily see that any two completely depolarizing channels 
are concatenation equivalent. Namely, for any two states $\eta_1$ and $\eta_2$, 
there exists a channel satisfying $\mathcal{E}^*(\eta_1) 
= \eta_2$. 
This gives $\tr{\eta_2 A} \id = \tr{\eta_1 \mathcal{E}(A)} \id$. 
Moreover, any channel which is concatenation equivalent with a completely depolarizing channel is also a completely depolarizing channel; if $\Gamma$ satisfies 
$\Gamma \pleq \Phi$ for a completely depolarizing channel, 
there exists a channel $\mathcal{E}$ satisfying 
\begin{equation}
\Gamma (A)= \Phi \circ \mathcal{E}(A)
= \tr{ \mathcal{E}^* (\eta) A} \id \, .
\end{equation}
This calculation also shows that the equivalence class consisting of all completely depolarizing channels is the smallest element in the preordered set $\chan$.   
 
By the definition, the output state of a completely depolarizing channel $\Phi^*$ does not depend on the input state at all.
Therefore, it is evident that $\Phi$ is compatible with any other channel $\Lambda \in \chin$.
A joint channel for $\Phi$ and $\Lambda$ is 
\begin{equation}
A \otimes B \mapsto \Phi(A) \Lambda(B) \, .
\end{equation}
This property of completely depolarizing channels is their characteristic feature; any channel compatible with all channels is completely depolarizing.

\begin{proposition}
The following statements for a channel $\Phi$ are equivalent. 
\begin{itemize}
\item[(i)] $\Phi$ is completely depolarizing.
\item[(ii)] $\Phi \pleq \Gamma$ for any channel $\Gamma$.  
\item[(iii)] $\Phi$ is compatible with any other channel $\Lambda$.
\item[(iv)]$\Phi$ is compatible with the identity channel $id$.
\end{itemize}
\end{proposition}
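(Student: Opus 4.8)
The plan is to prove the cycle of implications (i)$\Rightarrow$(ii)$\Rightarrow$(iii)$\Rightarrow$(iv)$\Rightarrow$(i), since (i)$\Rightarrow$(ii) and (i)$\Rightarrow$(iii) have essentially already been carried out in the text preceding the statement, and (iii)$\Rightarrow$(iv) is trivial (the identity channel is one particular ``other channel''). So the only substantive arrow is (iv)$\Rightarrow$(i): if $\Phi$ is compatible with the identity channel $\id$, then $\Phi$ is completely depolarizing. The key observation is that Prop.~\ref{prop:compatible} applies here with $\Lambda_1 = \Phi$ and $\Lambda_2 = \id$: compatibility with $\id$ is equivalent to $\Phi \pleq \bar{\id}$, where $\bar{\id}$ is a conjugate channel of the identity channel. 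So the heart of the matter is to identify $\bar{\id}$ and observe it is completely depolarizing, after which the result follows from the already-established fact (in the ``Completely depolarizing channels'' subsection) that any channel below a completely depolarizing channel in the concatenation preorder is itself completely depolarizing.

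So the crucial computational step is: compute a conjugate channel of the identity channel $\id:\lin(\hin)\to\lin(\hin)$. A minimal Stinespring representation of $\id$ has the form $\id(A) = V^*(A\otimes\id_\hik)V$ with $\hik$ one-dimensional and $V:\hin\to\hin\otimes\complex$ the canonical isometry $\psi\mapsto\psi\otimes 1$. Then the conjugate channel is $\bar{\id}(B) = V^*(\id_\hin\otimes B)V$ for $B\in\lin(\complex) = \complex$, which is just $B\mapsto B\,\id_\hin$ (scalar multiplication), i.e. the (trivial) completely depolarizing channel with one-dimensional output. Since all conjugate channels are concatenation-equivalent and all completely depolarizing channels are concatenation-equivalent, this shows $\bar{\id}$ is completely depolarizing. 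From (iv) and Prop.~\ref{prop:compatible} we get $\Phi \pleq \bar{\id}$, and then the earlier argument that ``any channel concatenation-below a completely depolarizing channel is completely depolarizing'' gives (i).

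I do not expect any serious obstacle here; the proof is a matter of assembling pieces already in place. The only point requiring a little care is making sure the Stinespring representation of $\id$ used is indeed minimal (so that Prop.~\ref{prop:compatible}'s unambiguity about conjugate channels applies), but since $\hik=\complex$ and $(\lin(\hin)\otimes\id)V\hin = \hin\otimes\complex$ is trivially dense, minimality is immediate. An alternative, even more elementary route for (iv)$\Rightarrow$(i) would bypass conjugate channels: if $\Lambda$ is a joint channel of $\Phi$ and $\id$, then for all $A,B$ one has $\Phi(A) = \Lambda(A\otimes\id)$ and $B = \Lambda(\id\otimes B)$, and a direct argument (using that $\Lambda$ restricted to $\id\otimes\lin(\hin)$ is the identity, hence the ``first slot'' is forced to act trivially on inputs) shows $\Phi(A)$ must be a constant multiple of $\id$ — but I think routing through Prop.~\ref{prop:compatible} is cleanest and most in keeping with the paper's narrative, so that is the version I would write up.

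\begin{proof}
(i)$\Rightarrow$(ii) and (i)$\Rightarrow$(iii) were shown in the discussion preceding the proposition, and (iii)$\Rightarrow$(iv) is immediate.
It remains to prove (iv)$\Rightarrow$(i).
Assume $\Phi$ is compatible with the identity channel $\id$.
Fix a minimal Stinespring representation of $\id$: take $\hik=\complex$ and $V:\hin\to\hin\otimes\complex$, $V\psi=\psi\otimes 1$, so that $\id(A)=V^*(A\otimes\id_\hik)V$.
This representation is minimal since $(\lin(\hin)\otimes\id)V\hin=\hin\otimes\complex$.
The corresponding conjugate channel $\bar{\id}:\lin(\complex)\to\lin(\hin)$ is $\bar{\id}(B)=V^*(\id_\hin\otimes B)V = B\,\id_\hin$, which is a completely depolarizing channel.
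By Prop. \ref{prop:compatible}, compatibility of $\Phi$ and $\id$ gives $\Phi \pleq \bar{\id}$.
Since $\bar{\id}$ is completely depolarizing, it follows from the earlier observation that any channel below a completely depolarizing channel in the concatenation preorder is itself completely depolarizing, hence $\Phi$ is completely depolarizing.
\end{proof}
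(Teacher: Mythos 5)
Your handling of the only substantive implication, (iv)$\Rightarrow$(i), is correct but follows a genuinely different route from the paper. The paper argues at the level of observables: by Prop.~\ref{prop:sufficient}, compatibility of $\Phi$ with $id$ forces $\Phi(\M)$ to be jointly measurable with every observable $\N$, in particular with every sharp one, so each $\Phi(A)$ commutes with all projections and is therefore a scalar multiple of $\id$. You instead stay at the level of channels: Prop.~\ref{prop:compatible} converts compatibility with $id$ into $\Phi \pleq \overline{id}$, you compute that a conjugate of the identity (one-dimensional environment) is the trace channel $B \mapsto B\,\id_{\hin}$, i.e.\ a completely depolarizing channel with one-dimensional output, and you then invoke the observation preceding the proposition that any channel concatenation-below a completely depolarizing channel is itself completely depolarizing. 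The computation of $\overline{id}$ is right, and the minimality check, while unnecessary (Prop.~\ref{prop:compatible} is insensitive to which conjugate is used, as all conjugates are equivalent), is harmless. Your route has the advantage of not leaning on the fact---used but not proved in the paper---that joint measurability with a sharp observable forces commutation; the paper's route has the advantage of a direct operational reading (no information extraction without complete loss of the input).

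There is, however, a gap in the logical bookkeeping. The implications you actually establish are (i)$\Rightarrow$(ii), (i)$\Rightarrow$(iii), (iii)$\Rightarrow$(iv) and (iv)$\Rightarrow$(i); no implication \emph{out of} (ii) appears anywhere, so you prove (i)$\Leftrightarrow$(iii)$\Leftrightarrow$(iv) and (i)$\Rightarrow$(ii), but not that (ii) implies any of the other statements---the four statements are therefore not shown to be equivalent. (Your announced cycle (i)$\Rightarrow$(ii)$\Rightarrow$(iii)$\Rightarrow$(iv)$\Rightarrow$(i) would need (ii)$\Rightarrow$(iii), which you silently replaced by (i)$\Rightarrow$(iii).) The fix is one line, with two natural options: either (ii)$\Rightarrow$(iii) as the paper does, since $\Phi \pleq \bar{\Lambda}$ for every $\Lambda$ gives compatibility with every $\Lambda$ by Prop.~\ref{prop:compatible}; or (ii)$\Rightarrow$(i) directly, by specializing $\Gamma$ to any completely depolarizing channel and applying the same down-closure observation you already use at the end of your argument.
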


\begin{proof}
(i) $\Rightarrow$ (ii). 
Trivial.
(ii)$\Rightarrow$ (iii):
Follows from Prop. \ref{prop:compatible}. 
(iii)$\Rightarrow$ (iv):
Trivial.
(iv) $\Rightarrow$ (i): 
Let $\Phi$ be a channel which is compatible with the identity channel $id$. 
By Prop. \ref{prop:sufficient}, for any pair of observables $\M$ and $\N$, then transformed observables $\Phi(\M)$ and $id(\N)=\N$
must be compatible. 
Thus for an operator $A\in\lin(\hout)$, the operator $\Phi(A)$ commutes with all projections $B\in\lin(\hout)$.
This implies that $\Phi(A)$ must be a scalar multiple of the identity operator. 
Since this is true for all $A\in\lin(\hout)$, there exists a state $\eta\in\state{\hout}$ such that 
$\Phi(A) = \tr{\eta A} \id$.   
\end{proof}

\subsection{Entanglement breaking channels}

We recall that a channel $\Phi^*$ is called \emph{entanglement breaking} if the bipartite state $(\Phi^* \otimes id)(\omega)$ is separable for any choice of the input state $\omega$.
This is equivalent to the condition that $\Phi^*$ is of the measure-prepare form \cite{HoShRu03}, i.e., there exists an observable $\F$ on $\hik$ and a set of states $\{\eta_x\}\subset\state{\hin}$ such that 
\begin{equation}\label{eq:ebc-s}
\Phi^*(\varrho)= \sum_x \tr{\varrho \F(x)} \eta_x \, .
\end{equation}
In the Heisenberg picture this reads
\begin{equation}\label{eq:ebc}
\Phi(A)= \sum_x \tr{\eta_x A} \F(x) \, .
\end{equation}

If $\Phi$ is entanglement breaking, then for any channel $\Lambda \in \chin$, the concatenated channels $\Phi \circ \Lambda$ are $\Lambda \circ \Phi$ are still entanglement breaking.
Namely, we have
\begin{equation}
(\Phi \circ \Lambda) (A)= \sum_x \tr{\Lambda^*(\eta_x) A} \F(x) \, ,
\end{equation}
and
\begin{equation}
(\Lambda \circ \Phi) (A)= \sum_x \tr{\eta_x A} \Lambda(\F(x)) \, ,
\end{equation}
which are both of the measure-prepare form.

\begin{proposition}\label{prop:below-M}
Let $\M$ be an observable and $\Lambda: \lin(\hout) \to \lin(\hin)$ a channel.
The following are equivalent:
\begin{enumerate}[(i)]
\item $\Lambda \pleq \Gamma_\M$
\item 
There exists a family of states $\{\eta_x\} \subset \state{\hout}$ such that 
\begin{equation}\label{eq:measure-prepare-M}
\Lambda(A)= \sum_x \tr{\eta_x A} \M(x) \, .
\end{equation}
\end{enumerate}
In particular, a channel $\Lambda$ is entanglement breaking if and only if $\Lambda \pleq \Gamma_\M$ for some observable $\M$.
\end{proposition}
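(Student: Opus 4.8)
The plan is to prove the two equivalent conditions directly and then read off the ``in particular'' claim by comparing \eqref{eq:measure-prepare-M} with the definition \eqref{eq:ebc} of an entanglement breaking channel.

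\medskip

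\textbf{Proof sketch.} For (ii)$\Rightarrow$(i), I would construct the intermediate channel explicitly. Given states $\{\eta_x\}\subset\state{\hout}$, define $\mathcal{E}:\lin(\hout)\to\lin(\ltwo{\Omega_\M})$ by $\mathcal{E}(A)=\sum_x \tr{\eta_x A}\,\kb{\delta_x}{\delta_x}$; this is a measure-prepare channel (hence completely positive) and it is unital since $\sum_x \tr{\eta_x \id}\kb{\delta_x}{\delta_x}=\sum_x\kb{\delta_x}{\delta_x}=\id$. Then
\begin{equation*}
(\Gamma_\M\circ\mathcal{E})(A)=\Gamma_\M\Big(\sum_x\tr{\eta_x A}\kb{\delta_x}{\delta_x}\Big)=\sum_x \tr{\eta_x A}\,\M(x)=\Lambda(A),
\end{equation*}
using the definition $\Gamma_\M(\kb{\delta_x}{\delta_x})=\M(x)$. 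Hence $\Lambda=\Gamma_\M\circ\mathcal{E}$, so $\Lambda\pleq\Gamma_\M$.

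\medskip

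For (i)$\Rightarrow$(ii), assume $\Lambda=\Gamma_\M\circ\Theta$ for some channel $\Theta:\lin(\hout)\to\lin(\ltwo{\Omega_\M})$. Apply the defining formula for $\Gamma_\M$: for $A\in\lin(\hout)$,
\begin{equation*}
\Lambda(A)=\Gamma_\M(\Theta(A))=\sum_x \ip{\delta_x}{\Theta(A)\,\delta_x}\,\M(x).
\end{equation*}
For each $x$ the functional $A\mapsto \ip{\delta_x}{\Theta(A)\,\delta_x}$ is positive (since $\Theta$ is positive) and, evaluated at $A=\id$, gives $\ip{\delta_x}{\Theta(\id)\delta_x}=\ip{\delta_x}{\delta_x}=1$ by unitality of $\Theta$; so it is a state, i.e.\ there is $\eta_x\in\state{\hout}$ with $\ip{\delta_x}{\Theta(A)\delta_x}=\tr{\eta_x A}$ for all $A$. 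Substituting gives exactly \eqref{eq:measure-prepare-M}. (This parallels the construction of the stochastic matrix $\nu_{xy}$ in the earlier proposition on post-processing.)

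\medskip

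Finally, the ``in particular'' statement: if $\Lambda\pleq\Gamma_\M$ for some observable $\M$, then by the equivalence $\Lambda$ has the form \eqref{eq:measure-prepare-M}, which is precisely the Heisenberg-picture form \eqref{eq:ebc} of a measure-prepare channel (take $\F=\M$), hence $\Lambda$ is entanglement breaking. Conversely, if $\Lambda$ is entanglement breaking, then by \eqref{eq:ebc} there is an observable $\F$ with $\Lambda(A)=\sum_x\tr{\eta_x A}\F(x)$; taking $\M=\F$, condition (ii) holds, so $\Lambda\pleq\Gamma_\M$. I expect no serious obstacle here --- the only point requiring a little care is the passage from a positive unital functional on $\lin(\hout)$ to a density operator $\eta_x$, which is just the standard trace-duality identification of states with normalized positive functionals in finite dimension.
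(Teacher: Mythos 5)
Your proposal is correct and follows essentially the same route as the paper: the intermediate channel $\mathcal{E}(A)=\sum_x \tr{\eta_x A}\kb{\delta_x}{\delta_x}$ for (ii)$\Rightarrow$(i) is exactly the paper's construction, and your identification of the positive unital functional $A\mapsto\ip{\delta_x}{\Theta(A)\delta_x}$ with a state $\eta_x$ is just the dual-picture phrasing of the paper's choice $\eta_x=\mathcal{E}^*(\kb{\delta_x}{\delta_x})$. Your explicit treatment of the ``in particular'' claim, via comparison with the measure-prepare form \eqref{eq:ebc}, fills in a step the paper leaves implicit, and it is fine.
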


\begin{proof}
(i)$\Rightarrow$(ii): There exists a channel $\mathcal{E}$ such that 
$\Lambda = \Gamma_{\M}\circ \mathcal{E}$. 
We define $\eta_x$ as $\rho_x := \mathcal{E}^*(\kb{\delta_x}{\delta_x})$, 
and then
\begin{eqnarray*}
\Lambda(A) = (\Gamma_{\M}\circ \mathcal{E})(A)= \sum_x \tr{\eta_x A} \M(x). 
\end{eqnarray*}
\newline
(ii)$\Rightarrow$(i): We define a channel $\mathcal{E}$ by 
\begin{equation}
\mathcal{E}(A):= \sum_x \tr{\eta_x A } \kb{\delta_x}{\delta_x} \, .
\end{equation} 
Then it gives $\Lambda = \Gamma_{\M}\circ \mathcal{E}$. 
\end{proof}

\subsection{Self-compatible channels}

We recall that a channel $\Lambda$ is called \emph{self-compatible} if $\Lambda$ is compatible with itself \cite{Haapasalo15}.
By the definition, a channel $\Lambda$ is self-compatible if there exists a broadcast channel that simulates the output of $\Lambda$ twice for a single input.
In this sense, the action of a self-compatible channel can be duplicated. 
By Prop. \ref{prop:compatible} a channel $\Lambda$ is self-compatible if and only if $\Lambda \pleq \bar{\Lambda}$.
Therefore, \emph{the self-compatible channels are exactly the antidegradable channels}. 
This equivalence leads to some useful observations.
For instance, the fact that the set of antidegradable channels (with fixed input and output spaces) is convex \cite{CuRuSm08} follows directly from our our framework; it is easy to see from the definition of compatibility that a convex combination of two self-compatible channels is again self-compatible.
We also recall that the antidegradable channels have been characterized in a game-theoretic framework \cite{BuDaSt14}, and this further clarifies the meaning of self-compatibility.

In Sec. \ref{sec:more} we have defined the compatibility of $n$ channels, so we can also ask if $n$ copies of a given channel are compatible.
This leads to the following notion. 

\begin{definition}
A channel $\Lambda$ is \emph{$n$-self-compatible} if $n$ copies of $\Lambda$ are compatible.
\end{definition}

It is easy to see that \emph{every observable is $n$-self-compatible for any $n$}.
Physically the reason is simply that we can copy the obtained measurement outcomes.
To see this in our mathematical formalism, we fix a finite set $\Omega\subset\integer$ and define the copying channel $\Cc^*_n : \state{\ltwo{\Omega}} \to \state{\ltwo{\Omega})^{\otimes n}}$ of the orthonormal basis $\{\delta_x\}_{x \in \Omega}$ as
\begin{equation}
\Cc^*_n (\varrho) = \sum_x \ip{\delta_x}{\varrho \delta_x} \kb{\delta_x^{\otimes n}}{\delta_x^{\otimes n}} \, .
\end{equation}
For each observable $\M$ with the outcome set $\Omega$, we have
\begin{align*}
(\Cc^*_n \circ \Gamma^*_\M) (\varrho) = \sum_x \tr{\varrho \M(x)} \kb{\delta_x^{\otimes n}}{\delta_x^{\otimes n}} \, .
\end{align*}
It is straightforward to verify that $\Cc^*_n \circ \Gamma^*_\M$ is a joint channel for $n$ copies of $\Gamma^*_\M$.

\begin{proposition}\label{prop:n-self}
A channel $\Lambda$ is $n$-self-compatible for all $n=2,3,\ldots$ if and only if $\Lambda$ is entanglement breaking.
\end{proposition}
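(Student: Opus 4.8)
The plan is to prove the two implications separately. The implication from entanglement breaking to ``$n$-self-compatible for all $n$'' is immediate from earlier results: if $\Lambda$ is entanglement breaking, then by Prop.~\ref{prop:below-M} there is an observable $\M$ with $\Lambda \pleq \Gamma_\M$, and we have already observed that every observable is $n$-self-compatible, i.e.\ that $n$ copies of $\Gamma_\M$ are compatible for every $n$. Applying Prop.~\ref{prop:noisy-2} with $\Lambda_1=\cdots=\Lambda_n=\Gamma_\M$ and $\Phi_1=\cdots=\Phi_n=\Lambda$ then shows that $n$ copies of $\Lambda$ are compatible, for every $n$.

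For the converse, suppose that for each $n$ there is a joint channel $\Lambda^{(n)}:\lin(\hout^{\otimes n})\to\lin(\hin)$ of $n$ copies of $\Lambda$. Replacing $\Lambda^{(n)}$ by its permutation average $X\mapsto \tfrac{1}{n!}\sum_{\pi}\Lambda^{(n)}(U_\pi^* X U_\pi)$, where $U_\pi$ permutes the tensor factors of $\hout^{\otimes n}$, we may assume that $(\Lambda^{(n)})^*(\varrho)$ is a permutation-symmetric state for every $\varrho$; this average is still a channel, and since every marginal of $\Lambda^{(n)}$ equals $\Lambda$ it is still a joint channel of $n$ copies of $\Lambda$, so in particular each single-factor marginal of $(\Lambda^{(n)})^*(\varrho)$ equals $\Lambda^*(\varrho)$.

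Now fix a finite dimensional reference space $\hik$ and a bipartite input $\omega\in\state{\hin\otimes\hik}$, and set $\sigma_n:=((\Lambda^{(n)})^*\otimes id_\hik)(\omega)\in\state{\hout^{\otimes n}\otimes\hik}$. Then $\sigma_n$ is invariant under permutations of the $n$ copies of $\hout$, and tracing out all but the first copy of $\hout$ yields $(\Lambda^*\otimes id_\hik)(\omega)$. Hence $(\Lambda^*\otimes id_\hik)(\omega)$ has a permutation-symmetric extension to $n$ copies of the $\hout$-system for every $n$, and therefore, by the quantum de Finetti theorem --- equivalently, by the fact that a bipartite state admitting a permutation-symmetric $k$-extension for all $k$ is necessarily separable (Fannes--Lewis--Verbeure, Raggio--Werner, Werner) --- the state $(\Lambda^*\otimes id_\hik)(\omega)$ is separable. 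As $\hik$ and $\omega$ were arbitrary, $\Lambda^*$ is entanglement breaking. (One could equivalently restrict to $\hik=\hin$ and $\omega$ a maximally entangled state, reducing to separability of the Choi state of $\Lambda^*$.)

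The only substantial step is this last one: the assumption furnishes, for each $n$ \emph{separately}, merely \emph{some} symmetric $n$-extension of $(\Lambda^*\otimes id_\hik)(\omega)$ --- these extensions need not be consistent across different $n$ --- so one genuinely has to pass through the asymptotic equivalence between unlimited symmetric extendibility and separability. The remaining ingredients (the symmetrization of the joint channels and the marginal bookkeeping) are routine and only use the standing finite-dimensionality assumption.
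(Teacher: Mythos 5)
Your proof is correct, and the ``if'' direction coincides with the paper's argument (Prop.~\ref{prop:below-M} plus $n$-self-compatibility of $\Gamma_\M$ and Prop.~\ref{prop:noisy-2}). In the ``only if'' direction you start the same way --- symmetrizing the joint channel over the permutation group --- but then diverge in the key external ingredient: the paper stays at the level of channels and invokes Chiribella's finite de Finetti bound for symmetric broadcast channels, giving an entanglement breaking channel $\Phi_n$ with $\no{\Lambda^*-\Phi_n^*}_\diamond\le 2d^2/n$, and concludes by closedness of the set of entanglement breaking channels; you instead pass to states, observing that $(\Lambda^*\otimes id)(\omega)$ inherits a permutation-invariant $n$-extension from the symmetrized joint channel for every $n$, and then apply the characterization of separability by unlimited symmetric extendibility (Werner's extension theorem / asymptotic de Finetti). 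Your marginal and symmetry bookkeeping is sound, and you correctly flag the one real subtlety --- the extensions obtained for different $n$ need not be consistent, so the naive infinite de Finetti theorem for exchangeable states does not apply directly and one must use the extendibility-implies-separability result (or a compactness/diagonal argument to build a consistent family). The trade-off: the paper's route yields a quantitative approximation statement for each finite $n$ and needs no reference system or Choi-state discussion, while yours rests on an older, purely qualitative and arguably more standard state-level theorem and avoids diamond-norm estimates and the closedness argument; one could streamline your version further by testing only the Choi state, as you note parenthetically.
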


\begin{proof}
'If': Let $\Lambda$ be an entanglement breaking channel.
By Prop. \ref{prop:below-M} there exists an observable $\M$ such that $\Lambda \pleq \Gamma_\M$.
As we have seen, $\Gamma_\M$ is $n$-self-compatible for every $n$.
It follows from Prop. \ref{prop:noisy-2} that also $\Lambda$ is $n$-self-compatible for every $n$.
\newline
'Only if:' Let $\Lambda$ be a channel that is $n$-self-compatible for all $n=2,3,\ldots$.
Fix $n$, and let $\Theta_n$ be a joint channel for the $n$ copies of $\Lambda$.
We define a modified channel $\widetilde{\Theta}_n$ as
\begin{equation}
\widetilde{\Theta}_n (A) := \frac{1}{n!} \sum_{\pi \in S_n}\Theta_n(U^{(n)*}_\pi A U^{(n)}_\pi ) \, , 
\end{equation}
where $S_n$ is the symmetric group of all permutations of $n$ objects and $U^{(n)}_\pi$ is the unitary operator on $\hi^{\otimes n}$ that permutes the $n$ copies of $\hout$ according to the permutation $\pi\in S_n$.
The channel $\widetilde{\Theta}_n$ is still a joint channel for the $n$ copies of $\Lambda$, and it satisfies the additional symmetry condition
\begin{equation}
\widetilde{\Theta}_n (A) = \widetilde{\Theta}_n(U^{(n)*}_\pi A U^{(n)}_\pi ) 
\end{equation}
for all $A$ and $\pi\in S_n$.
This symmetry property means that $\widetilde{\Theta}_n$ is a symmetric broadcast channel.
It was proved in \cite{Chiribella11} that there exists an entanglement breaking channel  $\Phi_n$, depending on $n$, such that
\begin{align}
\no{ \Lambda^* -  \Phi_n^* }_\diamond \leq \frac{2 d^2}{n} \, .
\end{align}
As this is true for all $n$, we conclude that for $\Lambda$ is arbitrarily closed to an entanglement breaking channel. 
Since the set of entanglement breaking channels is closed, $\Lambda$ must be an entanglement breaking channel itself.
\end{proof}

\subsection{Incompatibility breaking channels}

As defined in \cite{HeKiReSc15}, a channel $\Phi$ is \emph{$n$-incompatibility breaking} if observables $\Phi(\M_1),\ldots,\Phi(\M_n)$ are jointly measurable for any choice of $n$ observables $\M_1,\ldots,\M_n$.
As we have seen, the joint measurability of observables is equivalent to the compatibility of the respective channels. 
Therefore, a channel $\Phi$ is $n$-incompatibility breaking if and only if the channels $\Phi \circ \Gamma_{\M_1},\ldots,\Phi \circ \Gamma_{\M_n}$ are compatible for all observables $\M_1,\ldots,\M_n$.

The set $\{ \Gamma_\M: \M \in \obsin \}$ is a subset of $\chin$.
We have thus the following direct generalization of the notion of $n$-incompatibility breaking channels to an arbitrary subset of $\chin$.

\begin{definition}
Let $\Phi: \lin(\hin) \to \lin(\hik)$ be a channel and $\chansub \subseteq \chin$.
\begin{enumerate}[(a)]
\item For $n=2,3,\ldots$, $\Phi$ is \emph{$n$-incompatibility breaking on $\chansub$} if channels $\Phi \circ \Lambda_1,\ldots,\Phi \circ \Lambda_n$ are compatible for all channels $\Lambda_1,\ldots,\Lambda_n \in \chansub$.
\item $\Phi$ is \emph{incompatibility breaking on $\chansub$} if it is $n$-incompatibility breaking on $\chansub$ for all $n$.
\end{enumerate}
\end{definition}

It has been demonstrated in \cite{HeKiReSc15,Pusey15}, that the set of incompatibility breaking channels on $\obsin$ includes all entanglement breaking channels but it also includes other kind of channels.
In contrast, the next result shows that if we consider the incompatibility breaking channels on the total set $\chin$, these are just the entanglement breaking channels.

\begin{proposition}
A channel $\Phi: \lin(\hin) \to \lin(\hik)$ is incompatibility breaking on $\chin$ if and only if it is entanglement breaking.
\end{proposition}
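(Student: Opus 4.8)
The plan is to prove both implications by leveraging the results already established. The "only if" direction is the easier one: if $\Phi$ is incompatibility breaking on $\chin$, then in particular it is $n$-incompatibility breaking on the subset $\{\Gamma_\M : \M \in \obsin\}$ for every $n$, which means exactly that $\Phi$ is $n$-incompatibility breaking in the original sense of \cite{HeKiReSc15}, for every $n$. Now I would observe that being incompatibility breaking on $\chin$ is at least as strong as the following self-referential condition: taking $\Lambda_1 = \cdots = \Lambda_n = \Phi^\ast$ (more precisely, feeding $\Phi$ through copies of any channel and in particular through the identity is moot — the cleanest route is to note that $n$-incompatibility breaking on $\chin$ forces, via Prop.~\ref{prop:noisy-2} and the fact that every $\Lambda_j \pleq id$, nothing directly, so instead) I would feed in $n$ copies of $id_\hin$: then $\Phi \circ id = \Phi$ for each slot, so compatibility of $\Phi,\ldots,\Phi$ ($n$ copies) holds, i.e. $\Phi$ is $n$-self-compatible, for every $n$. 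By Prop.~\ref{prop:n-self}, a channel that is $n$-self-compatible for all $n$ is entanglement breaking. That gives the "only if" direction.

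For the "if" direction, suppose $\Phi$ is entanglement breaking. By Prop.~\ref{prop:below-M} there is an observable $\M$ with $\Phi \pleq \Gamma_\M$, say $\Phi = \Gamma_\M \circ \E$ for some channel $\E$. Fix $n$ and arbitrary channels $\Lambda_1,\ldots,\Lambda_n \in \chin$. I want to show $\Phi \circ \Lambda_1, \ldots, \Phi\circ\Lambda_n$ are compatible. Write $\Phi \circ \Lambda_j = \Gamma_\M \circ (\E \circ \Lambda_j)$, and set $\Theta_j := \E \circ \Lambda_j$, a channel from $\lin(\hout_j)$ to $\lin(\ltwo{\Omega_\M})$. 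Since $\Gamma_\M$ is $n$-self-compatible (shown just before Prop.~\ref{prop:n-self} using the copying channel $\Cc_n^\ast$), let $\Psi$ be a joint channel for the $n$ copies of $\Gamma_\M$. Then I would define
\begin{equation}
\Lambda := \Psi \circ (\Theta_1 \otimes \cdots \otimes \Theta_n) \, ,
\end{equation}
a channel $\lin(\hout_1 \otimes \cdots \otimes \hout_n) \to \lin(\hin)$. Checking the marginal conditions \eqref{eq:n}: feeding $A_1 \otimes \id \otimes \cdots \otimes \id$ through $\Theta_1 \otimes \cdots \otimes \Theta_n$ gives $\Theta_1(A_1) \otimes \id \otimes \cdots \otimes \id$ (using $\Theta_j(\id) = \id$ by unitality), and then $\Psi$ of that is the first marginal of the $n$-fold joint channel of $\Gamma_\M$, which is $\Gamma_\M(\Theta_1(A_1)) = (\Phi \circ \Lambda_1)(A_1)$; similarly for the other slots. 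Hence $\Lambda$ is a joint channel, so $\Phi$ is $n$-incompatibility breaking on $\chin$, and since $n$ was arbitrary, $\Phi$ is incompatibility breaking on $\chin$.

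The main obstacle, such as it is, lies in the "only if" direction: one has to spot that incompatibility breaking on the \emph{full} set $\chin$ is strong enough to imply $n$-self-compatibility of $\Phi$ itself, which it does because $id_\hin \in \chin$ and $\Phi \circ id = \Phi$ — this is precisely what fails on the smaller set $\{\Gamma_\M\}$, where the identity is not available, and it is the conceptual reason the proposition's conclusion is so much cleaner than the corresponding statement for $\obsin$. After that, Prop.~\ref{prop:n-self} does all the work. The "if" direction is essentially a bookkeeping argument built on the $n$-self-compatibility of $\Gamma_\M$ together with the monotonicity of compatibility under concatenation (Prop.~\ref{prop:noisy-2}), which in fact gives an even shorter route: $\Phi \circ \Lambda_j \pleq \Gamma_\M$ for each $j$ by transitivity of $\pleq$, and $\Gamma_\M$ is $n$-self-compatible, so by Prop.~\ref{prop:noisy-2} the channels $\Phi\circ\Lambda_1,\ldots,\Phi\circ\Lambda_n$ are compatible — I would present this streamlined version rather than the explicit construction of $\Lambda$.
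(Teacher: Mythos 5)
Your proof is correct. The ``only if'' direction is exactly the paper's argument: since $id_{\hin}\in\chin$ and $\Phi\circ id=\Phi$, incompatibility breaking on the full set $\chin$ forces $n$-self-compatibility of $\Phi$ for every $n$, and Prop.~\ref{prop:n-self} finishes it (your parenthetical detour before landing on this is muddled but harmless). The ``if'' direction differs mildly in packaging: the paper works directly from the measure-prepare form \eqref{eq:ebc}, writing down the joint channel explicitly as $\Lambda(A)=\sum_x \tr{\tilde{\eta}_x A}\,\F(x)$ with $\tilde{\eta}_x=\Lambda_1^*(\eta_x)\otimes\cdots\otimes\Lambda_n^*(\eta_x)$, whereas you make the argument modular: $\Phi\pleq\Gamma_\M$ by Prop.~\ref{prop:below-M}, hence $\Phi\circ\Lambda_j\pleq\Gamma_\M$ for every $j$, and then the $n$-self-compatibility of $\Gamma_\M$ (the copying-channel construction) together with Prop.~\ref{prop:noisy-2} yields compatibility of $\Phi\circ\Lambda_1,\ldots,\Phi\circ\Lambda_n$; your explicit $\Psi\circ(\Theta_1\otimes\cdots\otimes\Theta_n)$ construction unwinds to essentially the paper's joint channel anyway. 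Both routes rest on the same entanglement-breaking structure; the paper's version is self-contained and exhibits the joint channel concretely, while yours is shorter given the earlier results --- just note that the common input space of the channels involved here is $\hik$ rather than $\hin$ (the propositions apply verbatim with that fixed input space), and that the observable $\M$ furnished by Prop.~\ref{prop:below-M} lives on $\hik$.
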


\begin{proof}
'If': Let $\Phi$ be an entanglement breaking channel.
Hence, it can be written as in \eqref{eq:ebc} for some observable $\F$ on $\hik$ and a set of states $\{\eta_x\}\subset\state{\hin}$.
For a collection of $n$ channels $\Lambda_1,\ldots,\Lambda_n \in \chin$, we define a set of states $\tilde{\eta}_x$ as
\begin{align}
\tilde{\eta}_x = \Lambda_1^*(\eta_x) \otimes \cdots \otimes  \Lambda_n^*(\eta_x) \, .
\end{align}
We then define a channel 
$\Lambda:\lin(\hout_1 \otimes \hout_2 \otimes \cdots \otimes \hout_n)\to\lin(\hin)$
as
\begin{equation}
\Lambda(A) = \sum_x \tr{\tilde{\eta}_xA} \F(x) \, .
\end{equation}
It is straightforward to verify that $\Lambda$ is a joint channel for the channels $\Phi \circ \Lambda_1,\ldots,\Phi\circ \Lambda_n$.
Hence, $\Phi$ is $n$-incompatibility breaking on $\chin$ for $n$.
\newline
'Only if:' Fix $n=2,3,..$ and let $id:\lin(\hin)\to\lin(\hin)$ be the identity channel.
As $\Phi \circ id = \Phi$ and $\Phi$ is assumed to be $n$-incompatibility breaking, we conclude that $\Phi$ is $n$-self-compatible.
This is true for all $n$, so it follows from Prop. \ref{prop:n-self} that $\Phi$ is entanglement breaking.
\end{proof}

\paragraph{\emph{Acknowledgments}.}

The authors are grateful to Giulio Chiribella, Erkka Haapasalo, Yui Kuramochi, Jussi Schultz and Mario Ziman for their comments on an earlier version of this paper. 
T.H. acknowledges support from the Academy of Finland (Project No. 287750).

\end{document}